\newtheorem{lemma}{Lemma}[section]
\newenvironment{proof}{\textbf{Proof:}\\}{\hfill $\square$ \medskip \\}
\newtheorem{proposition}{Proposition}[section]
\newtheorem{definition}{Definition}[section]
\newtheorem{rem}{Remark}[section]
\newtheorem{theorem}{Theorem}[section]
\DeclareMathOperator{\E}{\mathbb{E}}
\DeclareMathOperator{\Prob}{\mathbb{P}}
\DeclareMathOperator{\Var}{Var}
\DeclareMathOperator{\tod}{\xrightarrow{\mathcal{D}}}
\DeclareMathOperator{\topr}{\xrightarrow{\mathbb{P}}}
\def\blfootnote{\xdef\@thefnmark{}\@footnotetext}\makeatother
\begin{document}
\setlength{\parindent}{0pt}
\title{A central limit theorem for Latin hypercube sampling with dependence and application to exotic basket option pricing}
\author{Christoph Aistleitner\footnote{Graz University of Technology,
Institute of Mathematics A, Steyrergasse 30, 8010 Graz, Austria. \mbox{e-mail}: \texttt{aistleitner@math.tugraz.at}. Research supported by the Austrian Research Foundation (FWF),
Project S9603-N23.} \quad Markus Hofer\footnote{Graz University of Technology,
Institute of Mathematics A, Steyrergasse 30, 8010 Graz, Austria. \mbox{e-mail}: \texttt{markus.hofer@tugraz.at}. Research supported by the Austrian Research Foundation (FWF) and by the DOC [Doctoral Fellowship Programme of the Austrian Academy of Sciences],
Project S9603-N23.} \quad Robert Tichy \footnote{Graz University of Technology,
Institute of Mathematics A, Steyrergasse 30, 8010 Graz, Austria. \mbox{e-mail}: \texttt{tichy@tugraz.at}. Research supported by the Austrian Research Foundation (FWF),
Project S9603-N23.}}

\date{}
\maketitle

\blfootnote{{\bf Mathematics Subject Classification:} 62H20 (62G05 62G20 62K15 62P05 91B82)}
\blfootnote{{\bf Keywords:} Monte Carlo, Variance reduction techniques, Latin hypercube sampling, option pricing, variance gamma, probabilistic methods}

\begin{abstract}
We consider the problem of estimating $\E [f(U^1, \ldots, U^d)]$, where $(U^1, \ldots, U^d)$ denotes a random vector with uniformly distributed marginals. In general, Latin hypercube sampling (LHS) is a powerful tool for solving this kind of high-dimensional numerical integration problem. In the case of dependent components of the random vector $(U^1, \ldots, U^d)$ one can achieve more accurate results by using Latin hypercube sampling with dependence (LHSD). We state a central limit theorem for the $d$-dimensional LHSD estimator, by this means generalising a result of Packham and Schmidt. Furthermore we give conditions on the function $f$ and the distribution of $(U^1, \ldots, U^d)$ under which a reduction of variance can be achieved. Finally we compare the effectiveness of Monte Carlo and LHSD estimators numerically in exotic basket option pricing problems.
\end{abstract}

\section{Introduction}
In this article we consider the problem of reducing the variance of a Monte Carlo (MC) estimator for special functionals of a random vector with dependent components. Several different techniques can be used for this kind of problem, with different advantages and shortcomings (for a detailed comparison, see \citep[Section 4]{glass}). A well-known technique is \emph{Latin hypercube sampling} (LHS), which is a multi-dimensional version of the \emph{stratified sampling} method and has been introduced by \citep{mckay}. Although this method is well applicable to many different types of problems, it cannot deal with dependence structures among the components of random vectors. Therefore, we consider \emph{Latin hypercube sampling with dependence} (LHSD), which was introduced by \citep{stein} and provides variance reduction for many problems, especially in financial mathematics.\\
Consider the problem of estimating $\E [f(U^1, \ldots, U^d)]$ for a Borel-measurable and $C$-integrable function $f: [0,1]^d \rightarrow \mathbb{R}$, where $(U^1, \ldots, U^d)$ is a random vector with uniformly distributed marginals and copula $C$. Let $(U^1_i, \ldots, U^d_i),~1 \leq i \leq n,$ denote an i.i.d.\ sample from this distribution. The standard Monte Carlo estimator, which is given by $1/n \sum_{i = 1}^n f(U^1_i, \ldots, U^d_i)$, is strongly consistent, and by the central limit theorem for sums of independent random variables the distribution of the scaled estimator converges to a normal distribution, ie:
\begin{equation*}
 \frac{1}{\sqrt{n}} \sum_{i = 1}^n [f(U^1_i, \ldots, U^d_i) - \E [f(U^1, \ldots, U^d)]] \tod N(0, \sigma^2_{MC}),
\end{equation*}
where $\sigma^2_{MC} = \Var (f(U^1, \ldots, U^d))$. In particular this means that the standard deviation of the estimator converges to zero with rate $ \frac{1}{\sqrt{n}}$.\\
The aim of this paper is to establish a similar result for the LHSD estimator, under some additional conditions on the copula $C$ and the function $f$. This has already been done in the bivariate case by \citep{Packham} by using a result of \citep{fermanian}. \citet[Proposition 5.9]{Packham} also showed that under more restrictive conditions on the copula function $C$, the variance of the bivariate LHSD estimator does not exceed the variance of the standard Monte Carlo estimator.\\
An important application of Monte Carlo integration techniques lies in the field of financial mathematics. Many problems in finance result in the numerical computation of high-dimensional integrals, for which MC methods provide an efficient solution. Two examples are the pricing of Asian and discrete lookback options on several possibly correlated assets. We will investigate these special derivatives in numerical examples in the last section.\\
This paper is organised as follows: in the second section we introduce the main ideas of LHSD and recall some important results. Our main results are presented in the third section, where we state a central limit theorem and show under which conditions a reduction of variance, compared to the standard Monte Carlo method, is possible. The last section is dedicated to a comparison of the effectiveness of LHSD and MC in numerical examples.

\section{Preliminaries}
In this section, we recall the concept of stratified sampling and its extensions to Latin hypercube sampling and Latin hypercube sampling with dependence. We also state a consistency result, which was proved by \citep{Packham}.
\subsection{Stratified sampling and LHS}
Suppose that we want to estimate $\E (f(U))$, where $U$ is an uniformly distributed random variable on the interval $[0,1]$ (from now on denoted by $U([0,1])$), and where $f: [0,1] \rightarrow \mathbb{R}$ is a Borel-measurable and integrable function. By the simple fact that
\begin{equation*}
 \E (f(U)) = \sum_{i =1}^n \E (f(U) | U \in A_i) \Prob (U \in A_i),
\end{equation*}
where the intervals $A_1, \ldots, A_n$ (the so-called \emph{strata}) form a partition of $[0,1]$, we get an estimator for $\E (f(U))$ by sampling $U$ conditionally on the events $\{U \in A_i\}, i = 1, \ldots, n$. Choosing strata of the form $A_i = [\frac{i-1}{n}, \frac{i}{n})$ we can simply transform independent samples $U^1,\ldots, U^n$ from $U([0,1])$ by setting
\begin{equation*}
 V_i:= \frac{i-1}{n} + \frac{U_i}{n}, \qquad i = 1, \ldots, n,
\end{equation*}
which implies $V_i \in A_i, i = 1, \ldots, n$. The resulting estimator for $\E (f(U))$ given by $\frac{1}{n} \sum_{i =1}^n f(V_i)$ is consistent, and by the central limit theorem for sums of independent random variables the limit variance is smaller than the limit variance of a standard Monte Carlo estimator. For a more detailed analysis of stratified sampling techniques, see \citep[Section 4.3.1]{glass}.\\
This approach can be extended to the multivariate case in different ways. If we require that there has to be exactly one sample in every stratum, we need to draw $n^d$ samples, which is not feasible for a high number of dimensions $d$. One way to avoid this problem is Latin hypercube sampling. Assume we want to estimate $\E (f(U^1, \ldots, U^d))$, where $f: [0,1]^d \rightarrow \mathbb{R}$ is a Borel-measurable and integrable function. For fixed $n$ we generate $n$ independent samples denoted by $(U^1_i, \ldots, U^d_i), i= 1, \ldots, n$, where the $U^j_i, j = 1, \ldots, d$ are uniformly distributed on $[0,1]$. Additionally, we generate $d$ independent permutations of $\{1, \ldots, n\}$, denoted by $\pi_1, \ldots, \pi_d$, drawn from a discrete uniform distribution on the set of all possible permutations. Denote by $\pi_i^j$ the value to which $i$ is mapped by the $j$-th permutation. Then the $j$-th component of a Latin hypercube sample is given by
\begin{equation*}
 V_i^j := \frac{\pi^j_i - 1}{n} + \frac{U^j_i}{n}, \qquad j = 1, \ldots, d; \text{ } i = 1, \ldots, n.
\end{equation*}
By fixing a dimension $j$, the components $(V^j_1, \ldots, V^j_n)$ form a stratified sample with strata of equal length. It can be shown that the resulting estimator for $\E (f(U))$ is consistent, and by assuming that $f(U^1, \ldots, U^d)$ has a finite second moment it follows that the variance of the LHS estimator 
$$
\frac{1}{n} \sum_{i=1}^n f(V^1_i, \ldots, V^d_i)
$$ 
is smaller than the variance of the standard MC estimator, provided the number of sample points is sufficiently large, see \citep{stein}. If $f$ is bounded a central limit theorem for the LHS estimator can be shown, see \citep{owen}. Berry-Esseen-type bounds are also known, see \citep{loh}. A detailed discussion of LHS is given in \citep[Section 4.4]{glass}.\\
This technique is not suitable for dealing with random vectors with dependent components since the random variables $V^j_i, j=1, \ldots,d$, are independent. One way to extend the LHS method to random vectors with dependent components is to apply LHS to independent components and then introduce dependencies through a transformation of the LHS points. Such a procedure is tedious in general, and we will not pursue this approach any further.

\subsection{Latin hypercube sampling with dependence}
In this subsection, we introduce Latin hypercube sampling with dependence. The main difference to the LHS method is that instead of random permutations $\pi_i$ we use rank statistics, which are defined as follows:
\begin{definition}[Rank statistics]
Let $X_1, \ldots, X_n$ be i.i.d.\ random variables with a continuous distribution function. Denote the ordered random variables by $X_{(1)} < \dots < X_{(n)}$, $\Prob$-a.s. We call the index of $X_i$ within $X_{(1)} < \dots < X_{(n)}$ the $i$-th rank statistic, given by
\begin{equation}
r_{i,n} = r_{i,n}(X_1, \ldots, X_n) := \sum_{k = 1}^n \mathbf{1}_{\{ X_k \leq X_i \}}.
\end{equation}
\end{definition}
Consider a random vector $U = (U^1, \ldots, U^d)$, where every component $U^j$ is uniformly distributed on $[0,1]$ and the dependence structure of $U$ is modeled by a copula $C$. Let $(U^1_i, \ldots, U^d_i), i=1, \ldots, n$ denote a sequence of independent samples of $(U^1, \ldots, U^d)$, and let $r^j_{i,n}$ be the $i$-th rank statistic of $(U^j_1, \ldots, U^j_n)$ for $i= 1, \ldots, n $ and $j = 1, \ldots, d$. Then a LHSD is given by
\begin{equation}\label{vlhsd}
 V_{i,n}^j := \frac{r^j_{i,n} - 1}{n} + \frac{\eta_{i,n}^j}{n}, \qquad i=1, \ldots, n, \forall j= 1, \ldots,d,
\end{equation}
where $\eta_{i,n}^j$ are random variables in $[0,1]$. It is clear that $(V^j_{1,n}, \ldots, V_{n,n}^j)$ forms a stratified sampling in every dimension $j$, where every stratum has equal length.\\ 
\citet{Packham} consider different choices for $\eta_{i,n}^j$ to obtain special properties. For example, by choosing all $\eta_{i,n}^j$ uniformly distributed on $[0,1]$ and independent of $U^j_i$, the distribution of the $V_{i,n}^j$ within their strata is uniform. This choice has the disadvantage of necessitating the generation of $2n$ random variables instead of only $n$. An effective choice in terms of computation time is $\eta_{i,n}^j = 1/2$, which means that every $V_{i,n}^j$ is located exactly in the centre of its stratum.
In the remainder of this section, we briefly recall a result of \citep{Packham} concerning the consistency of the LHSD estimator for $\E (f(U))$, which is defined by
\begin{equation}\label{estimator}
 \frac{1}{n} \sum_{i = 1}^n f(V_{i,n}^1, \ldots, V_{i,n}^d).
\end{equation}
The usual law of large numbers for sums of independent random variables does not apply in this case for two reasons: firstly in each dimension the samples fail to be independent because of the application of the rank statistic, and secondly, increasing the samples size $n$ by one changes every term of the sum instead of just adding one. Nevertheless, it can be shown that the following consistency result holds, see \citep[Proposition 4.1]{Packham}:
\begin{proposition}
 Let $f: [0,1]^d \rightarrow \mathbb{R}$ be bounded and continuous C-a.e.\ . Then the LHSD estimator \eqref{estimator} is strongly consistent, ie :
\begin{equation*}
 \frac{1}{n} \sum_{i = 1}^n f(V_{i,n}^1, \ldots, V_{i,n}^d) \xrightarrow{\Prob a.s.} \E (f(U^1, \ldots, U^d)), \qquad \text{ as } n \rightarrow \infty.
\end{equation*}
\end{proposition}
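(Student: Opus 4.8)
The plan is to compare the LHSD estimator with the ordinary Monte Carlo estimator built from the \emph{same} underlying i.i.d.\ sample $(U_i^1,\ldots,U_i^d)$, $1 \le i \le n$: the Monte Carlo average converges by the classical strong law, and the point is that replacing each $U_i^j$ by its stratified counterpart $V_{i,n}^j$ perturbs the sample by a uniformly vanishing amount.

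First I would rewrite the LHSD points through the empirical marginal distribution functions. Setting $F_{n,j}(x) := \frac{1}{n}\sum_{k=1}^n \mathbf{1}_{\{U^j_k \le x\}}$, the definition of the rank statistic gives $r^j_{i,n}/n = F_{n,j}(U^j_i)$, so from \eqref{vlhsd} and $\eta^j_{i,n}\in[0,1]$ one gets the two-sided bound $F_{n,j}(U^j_i)-1/n \le V^j_{i,n}\le F_{n,j}(U^j_i)$ for all $i,j$. Since each $U^j$ is uniform on $[0,1]$ its distribution function is the identity, so the Glivenko--Cantelli theorem yields $\sup_{x\in[0,1]}|F_{n,j}(x)-x|\to 0$ almost surely for every $j$. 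Combined with the bound above this gives the key uniform coupling
\[
\max_{1 \le i \le n}\,\bigl\|(V^1_{i,n},\ldots,V^d_{i,n}) - (U^1_i,\ldots,U^d_i)\bigr\|_{\infty} \longrightarrow 0 \qquad \text{almost surely.}
\]

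Next I would pass to empirical measures. Let $\mu_n := \frac{1}{n}\sum_{i=1}^n \delta_{(V^1_{i,n},\ldots,V^d_{i,n})}$ and $\nu_n := \frac{1}{n}\sum_{i=1}^n \delta_{(U^1_i,\ldots,U^d_i)}$ on the cube $[0,1]^d$. Since the samples are i.i.d.\ with law $C$, the empirical measure $\nu_n$ converges weakly to $C$ almost surely (Varadarajan's theorem; for a single bounded continuous test function this is just the strong law). For any bounded continuous $g$, uniform continuity on the compact cube together with the coupling above gives $\max_i|g(V_{i,n})-g(U_i)|\to 0$, hence $\bigl|\int g\,d\mu_n-\int g\,d\nu_n\bigr|\to 0$ almost surely; combined with $\int g\,d\nu_n\to\int g\,dC$ this shows $\int g\,d\mu_n\to\int g\,dC$. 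As $g$ was arbitrary, $\mu_n\to C$ weakly, almost surely.

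Finally I would lift the continuity restriction on $f$. Let $D_f$ denote its set of discontinuity points; the hypothesis that $f$ is continuous $C$-a.e.\ means exactly $C(D_f)=0$. Since $f$ is in addition bounded, the portmanteau theorem (the integral form of the continuous mapping theorem) applied to the weak convergence $\mu_n\to C$ yields $\int f\,d\mu_n\to\int f\,dC=\E[f(U^1,\ldots,U^d)]$ almost surely, and $\int f\,d\mu_n$ is precisely the estimator \eqref{estimator}. The main obstacle is exactly this last transfer: because $f$ need not be continuous everywhere, the difference of the two estimators cannot be controlled directly by uniform continuity of $f$, so it is essential first to establish weak convergence of $\mu_n$ through the Glivenko--Cantelli coupling and only then to exploit $C(D_f)=0$ via the portmanteau theorem.
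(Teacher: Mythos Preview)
The paper does not give its own proof of this proposition; it is stated only as a recollection of \cite[Proposition 4.1]{Packham}, with no argument supplied here. So there is nothing in the present paper to compare your proof against directly.

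That said, your argument is correct and is essentially the standard route (and, to my knowledge, the one in Packham--Schmidt). The three steps are each sound: the identity $r^j_{i,n}/n=F_{n,j}(U^j_i)$ together with $\eta^j_{i,n}\in[0,1]$ gives $|V^j_{i,n}-U^j_i|\le \sup_x|F_{n,j}(x)-x|+1/n$, and Glivenko--Cantelli handles the supremum; the passage from $\nu_n$ to $\mu_n$ via uniform continuity of test functions on the compact cube is clean, and importantly both the coupling estimate and Varadarajan's theorem hold on a \emph{single} full-measure event, so weak convergence of $\mu_n$ is obtained almost surely (not merely for each test function separately); finally the extended continuous mapping theorem (bounded $f$ with $C(D_f)=0$) delivers the conclusion. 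Your closing paragraph correctly identifies why one must go through weak convergence of measures rather than compare $f(V_{i,n})$ and $f(U_i)$ termwise.
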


\section{Central limit theorem and variance reduction}
In this section we investigate the speed of convergence of the LHSD estimator and discuss situations in which the use of LHSD results in a reduction of variance. This has already been done for the bivariate case by \citep{Packham}. They have also guessed the higher-dimensional version of the main theorem, but no rigorous proof was given. Because of the fact that most problems in finance for which Monte Carlo techniques are suitable are high-dimensional integration problems, it is reasonable to investigate the speed of convergence and the (asymptotic) value of the variance also in the multivariate case.\\
In the sequel, let $\overline{C}_n$ denote the empirical distribution of the LHSD sample given by
\begin{equation*}
 \overline{C}_n(u^1, \ldots, u^d) := \frac{1}{n} \sum_{i = 1}^n \mathbf{1}_{\{V^1_{i,n} \leq u^1, \ldots, V^d_{i,n} \leq u^d\}},
\end{equation*}
which is a distribution function. Furthermore, we define $C_n$ as
\begin{equation}\label{Cn}
 C_n(u^1, \ldots, u^d) := \frac{1}{n} \sum_{i = 1}^n \mathbf{1}_{\{F^1_n(U^1_i) \leq u^1, \ldots, F^d_n(U^d_i)\leq u^d\}},
\end{equation}
where
\begin{equation*}
 F_n^j(u) = \frac{1}{n} \sum_{i = 1}^n \mathbf{1}_{\{U^j_i \leq u\}}, \qquad u \in [0,1],
\end{equation*}
are the one-dimensional empirical distribution functions based on $U^j_1, \ldots, U^j_n$ for $j = 1,\ldots, d$. To formulate a central limit theorem we will need some regularity conditions on the integrand $f$ and the copula $C$.
\begin{definition}[Hardy-Krause bounded variation]
 A function $f: [0,1]^d \to \mathbb{R}$ is of bounded variation (in the sense of Hardy-Krause) if $V(f) < \infty$ with
\begin{equation*}
 V(f) = \sum_{k = 1}^d \sum_{1 \leq i_1 < \ldots < i_k \leq d} V^{(k)}(f;i_1, \ldots, i_k).
\end{equation*}
Here, the functional $V^{(k)}(f)$ denotes the variation in the sense of Vitali of $f$ restricted to the $k$ - dimensional face $F^{(k)}(i_1, \ldots, i_k) = \{(u_1, \ldots, u_d) \in [0,1]^d : u_j = 1 \text{ for } j \neq i_1, \ldots, i_k \}$. The variation of a function $f$ in the sense of Vitali is defined by
\begin{equation*}
 V^{(k)}(f; i_1, \ldots, i_k) = \sup_{\mathcal{P}} \sum_{J \in \mathcal{P}(i_1, \ldots, i_k)} |\Delta(f;J)|,
\end{equation*}
where the supremum is extended over all partitions $\mathcal{P}(i_1, \ldots, i_k)$ of $F^{(k)}(i_1, \ldots, i_k)$ into subintervals $J$ and $\Delta(f;J)$ denotes the alternating sum of the values of $f$ at the vertices of $J$.  For more information on this topic, see \citep{owen2}.
\end{definition}

\begin{definition}
 A function $f : [0,1]^d \to \mathbb{R}$ is right continuous if for any sequence $(u^1_n, u^2_n, \ldots, u_n^d)_{n \in \mathbb{N}}$ with $u_n^j \downarrow u^j, j= 1, \ldots, d,$ 
\begin{equation*}
\lim_{n \to \infty} f(u^1_n, u^2_n, \ldots, u_n^d) = f(u^1, u^2, \ldots, u^d).
\end{equation*}
\end{definition}
The next statement concerning the convergence of random sequences will be used to prove Proposition \ref{tsu} and Theorem \ref{T2}. For more details see eg \citep[Theorem 18.8]{jacod}.
\begin{lemma}\label{jac}
 Let $(X_n)_{n \geq 1}$ and $(Y_n)_{n \geq 1}$ be sequences of $\mathbb{R}$-valued random variables, with $X_n \tod X$ and $|X_n - Y_n| \topr 0$. Then $Y_n \tod X$.
\end{lemma}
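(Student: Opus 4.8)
The statement is the classical ``converging together'' lemma, and the plan is to verify it directly through distribution functions. Write $F$ for the distribution function of the limit $X$ and fix a continuity point $x$ of $F$. Since convergence in distribution is equivalent to pointwise convergence of distribution functions at every continuity point of the limit, it suffices to show that $\Prob(Y_n \leq x) \to F(x)$ for every such $x$.

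First I would establish a two-sided set inclusion coupling the events $\{Y_n \leq x\}$ and $\{X_n \leq x \pm \epsilon\}$. For any $\epsilon > 0$, if $Y_n \leq x$ and $|X_n - Y_n| \leq \epsilon$, then $X_n \leq x + \epsilon$; hence
\begin{equation*}
 \{Y_n \leq x\} \subseteq \{X_n \leq x + \epsilon\} \cup \{|X_n - Y_n| > \epsilon\},
\end{equation*}
and symmetrically $\{X_n \leq x - \epsilon\} \subseteq \{Y_n \leq x\} \cup \{|X_n - Y_n| > \epsilon\}$. Passing to probabilities and using that $\Prob(|X_n - Y_n| > \epsilon) \to 0$, which is exactly the hypothesis $|X_n - Y_n| \topr 0$, I would obtain, whenever $x \pm \epsilon$ are continuity points of $F$ (so that $\Prob(X_n \leq x \pm \epsilon) \to F(x \pm \epsilon)$ by $X_n \tod X$),
\begin{equation*}
 F(x - \epsilon) \leq \liminf_{n} \Prob(Y_n \leq x) \leq \limsup_{n} \Prob(Y_n \leq x) \leq F(x + \epsilon).
\end{equation*}

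The final step is to let $\epsilon \downarrow 0$. Since the set of discontinuity points of a distribution function is at most countable, the continuity points are dense, so one can choose a sequence $\epsilon \downarrow 0$ along which $x \pm \epsilon$ remain continuity points; by continuity of $F$ at $x$ both bounds converge to $F(x)$, yielding $\lim_n \Prob(Y_n \leq x) = F(x)$ and therefore $Y_n \tod X$.

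The only points requiring care---and hence the main, though minor, obstacle---are the bookkeeping of which points must be continuity points of $F$, handled by the density of continuity points, and the direction of the inequalities in the $\limsup$/$\liminf$ sandwich. An entirely equivalent route would proceed via characteristic functions: since $\E[e^{itX_n}] \to \E[e^{itX}]$, it suffices to show $\E[e^{itY_n}] - \E[e^{itX_n}] \to 0$, and this follows from the bound $|e^{it(Y_n - X_n)} - 1| \leq \min(2, |t|\,|X_n - Y_n|)$, which tends to $0$ in probability, together with bounded convergence; an appeal to the L\'evy continuity theorem then completes the argument.
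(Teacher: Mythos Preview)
Your proof is correct and is the standard argument for Slutsky's ``converging together'' lemma; the sandwich via $\{Y_n\le x\}\subseteq\{X_n\le x+\epsilon\}\cup\{|X_n-Y_n|>\epsilon\}$ together with the density of continuity points of $F$ is exactly how one does it, and the alternative characteristic-function route you sketch is equally valid.

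There is, however, nothing to compare against: the paper does not prove this lemma at all. It is stated as a standard fact with a reference (``For more details see eg \citep[Theorem 18.8]{jacod}'') and is then used as a black box in the proofs of Proposition~\ref{prop2} and Theorem~\ref{T2}. So your write-up supplies a proof where the paper simply cites one.
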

The following proposition of \citep{tsuka} is a generalization of earlier results of \citep{stute} and \citep{fermanian}. It is the essential ingredient in proofs of our main theorems.
\begin{proposition}\label{tsu}
 Assume that $C$ is differentiable with continuous partial derivatives $\partial_j C(u^1, \ldots, u^d) = \frac{\partial C(u^1, \ldots, u^d)}{\partial u^j}$ for $j = 1, \ldots, d$. Then
\begin{equation*}
 \sqrt{n} \Bigl(\widetilde{C}_n(u^1, \ldots, u^d) - C(u^1, \ldots, u^d) \Bigr) \tod  G_C (u^1, \ldots, u^d),
\end{equation*}
where 
\begin{equation*}
 \widetilde{C}_n(u^1, \ldots, u^d) = \frac{1}{n} \sum_{k = 1}^n \mathbf{1}_{\{U^1_k \leq F^{1-}_n(u^1), \ldots, U^d_k\leq F^{d-}_n(u^d)\}},
\end{equation*}
denotes the empirical copula function and $F^{j-}_n$ denote the generalised quantile functions of $F^{j}_n$ for $j = 1, \ldots, d$, defined by 
\begin{equation*}
 F^{j-}_n(u) = \inf \{x \in \mathbb{R}| F^{j}_n(x) \geq u\}.
\end{equation*}
Furthermore, $G_C$ is a centred Gaussian random field given by
\begin{equation} \label{gc}
 G_C(u^1, \ldots, u^d) = B_C(u^1, \ldots, u^d) - \sum_{j = 1}^d \partial_j C (u^1, \ldots, u^d) B_C (1, \ldots , 1, u^j, 1, \ldots, 1),
\end{equation}
$B_C$ is a d-dimensional pinned Brownian sheet on $[0,1]^d$ with covariance function
\begin{equation}\label{cov}
 \E [B_C (u^1, \ldots, u^d) \cdot B_C(\overline{u}^1, \ldots, \overline{u}^d)] = C((u^1, \ldots, u^d) \wedge (\overline{u}^1, \ldots, \overline{u}^d)) - C(u^1, \ldots, u^d) C(\overline{u}^1, \ldots, \overline{u}^d),
\end{equation}
where $(u^1, \ldots, u^d) \wedge (\overline{u}^1, \ldots, \overline{u}^d)$ denotes the componentwise minimum.
\end{proposition}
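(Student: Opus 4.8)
The plan is to reduce the convergence of the empirical copula process to the (classical) functional central limit theorem for the multivariate empirical distribution function, and then to transfer that limit through the marginal empirical quantile transforms by a first-order expansion of $C$. Since the margins of $(U^1,\ldots,U^d)$ are uniform on $[0,1]$, the joint distribution function coincides with the copula $C$, so the ordinary empirical process
\[
\alpha_n(u^1,\ldots,u^d) := \sqrt{n}\Bigl(\tfrac1n\sum_{k=1}^n \mathbf{1}_{\{U^1_k \leq u^1,\ldots,U^d_k \leq u^d\}} - C(u^1,\ldots,u^d)\Bigr)
\]
converges weakly in $\ell^\infty([0,1]^d)$ to the pinned Brownian sheet $B_C$ with covariance \eqref{cov}; I would quote this multivariate Donsker theorem together with the accompanying asymptotic equicontinuity (tightness) of $\alpha_n$. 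Writing $v^j := F_n^{j-}(u^j)$ and abbreviating $u=(u^1,\ldots,u^d)$, one observes that $\widetilde{C}_n(u)$ is simply $\alpha_n/\sqrt{n}+C$ evaluated at the random argument $(v^1,\ldots,v^d)$, so that the whole problem reduces to controlling the effect of replacing $u^j$ by the empirical quantile $v^j$.

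The key decomposition I would use is
\[
\sqrt{n}\bigl(\widetilde{C}_n(u) - C(u)\bigr) = \alpha_n(v^1,\ldots,v^d) + \sqrt{n}\bigl(C(v^1,\ldots,v^d) - C(u^1,\ldots,u^d)\bigr).
\]
For the first summand I would invoke stochastic equicontinuity of $\alpha_n$: since $v^j \to u^j$ uniformly (Glivenko--Cantelli for the margins), $\sup_u|\alpha_n(v^1,\ldots,v^d) - \alpha_n(u)| \topr 0$, so the first term behaves like $\alpha_n(u) \tod B_C(u)$. For the second summand I would Taylor-expand $C$ about $u$, using the assumed continuity of the partial derivatives $\partial_j C$, obtaining $\sqrt{n}(C(v)-C(u)) = \sum_{j=1}^d \partial_j C(u)\,\sqrt{n}(v^j - u^j) + R_n(u)$, where the remainder $R_n$ has to be bounded uniformly.

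The ingredient linking the two summands is the classical asymptotic equivalence of the uniform empirical and quantile processes: $\sqrt{n}(F_n^{j-}(u^j) - u^j) + \sqrt{n}(F_n^j(u^j) - u^j) \topr 0$ uniformly in $u^j$. Because $F_n^j(u^j)$ is exactly the $j$-th marginal of the joint empirical distribution function, one has $\sqrt{n}(F_n^j(u^j) - u^j) = \alpha_n(1,\ldots,1,u^j,1,\ldots,1) \tod B_C(1,\ldots,1,u^j,1,\ldots,1)$. Substituting this into the Taylor term yields
\[
\sqrt{n}\bigl(C(v)-C(u)\bigr) \topr -\sum_{j = 1}^d \partial_j C(u)\,B_C(1,\ldots,1,u^j,1,\ldots,1),
\]
and adding the two limits reproduces precisely $G_C$ in \eqref{gc}. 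At each step where an asymptotically equivalent approximation replaces the original quantity, I would appeal to Lemma \ref{jac} to transfer the weak limit back to $\widetilde{C}_n$.

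The main obstacle will be uniformity: all three approximations above ($\alpha_n(v)\approx\alpha_n(u)$, the quantile--empirical equivalence, and the Taylor remainder $R_n$) must be shown to be $o_{\Prob}(1)$ \emph{uniformly} over $u\in[0,1]^d$, so that the conclusion is genuine weak convergence in $\ell^\infty([0,1]^d)$ rather than merely a finite-dimensional statement. This rests on the tightness of the empirical process and on uniform control of $R_n$, which is exactly where the hypothesis of continuous partial derivatives of $C$ enters. A more structural alternative would be to write the map sending a distribution function to its copula as a composition of marginal quantile maps and an evaluation map, establish its Hadamard differentiability tangentially to the space of continuous functions, and then apply the functional delta method; the differential computed at $C$ is precisely the operator $\alpha \mapsto \alpha - \sum_{j} \partial_j C\cdot\alpha(1,\ldots,1,\cdot,1,\ldots,1)$ underlying $G_C$, so this route delivers the same limit.
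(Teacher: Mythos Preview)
Your argument is sound and is in fact the standard route to this result (the decomposition into $\alpha_n$ evaluated at empirical quantiles plus a Taylor term, followed by stochastic equicontinuity and the Bahadur--Kiefer equivalence), and your alternative via Hadamard differentiability and the functional delta method is equally standard. However, the paper does not give its own proof of this proposition at all: it is stated as a quotation from \citet{tsuka}, introduced as ``a generalization of earlier results of \citet{stute} and \citet{fermanian}'', and used as a black box in the proofs of the subsequent results. So there is no proof in the paper to compare against; what you have sketched is essentially the argument one finds in those cited sources.
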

We can formulate a similar result for the sequence $C_n$.
\begin{proposition}\label{prop2}
 Under the conditions of Proposition \ref{tsu},
\begin{equation}\label{conv}
 \sqrt{n} \Bigl(C_n(u^1, \ldots, u^d) - C(u^1, \ldots, u^d) \Bigr) \tod  G_C (u^1, \ldots, u^d)
\end{equation}
holds, where all definitions are as in Proposition \ref{tsu} and $C_n(u^1, \ldots, u^d)$ is given in \eqref{Cn}.
\end{proposition}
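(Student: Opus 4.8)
The plan is to show that $C_n$ differs so little from the empirical copula $\widetilde{C}_n$ of Proposition \ref{tsu} that the scaled difference $\sqrt{n}(C_n - \widetilde{C}_n)$ vanishes in probability; the conclusion \eqref{conv} then follows immediately from Proposition \ref{tsu} together with the convergence Lemma \ref{jac}, applied with $X_n = \sqrt{n}(\widetilde{C}_n - C)$ and $Y_n = \sqrt{n}(C_n - C)$. Everything therefore reduces to a pointwise comparison of the two empirical objects at a fixed argument $(u^1, \ldots, u^d)$.

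First I would rewrite both quantities in terms of the rank statistics $r^j_{i,n}$. Since the marginals of $U$ are continuous, the samples have no ties $\Prob$-a.s., so $F^j_n(U^j_i) = r^j_{i,n}/n$, and the defining event of $C_n$ in \eqref{Cn} becomes $\{r^j_{i,n} \leq \lfloor n u^j \rfloor \text{ for all } j\}$. On the other hand the empirical quantile function satisfies $F^{j-}_n(u^j) = U^j_{(\lceil n u^j \rceil)}$, so the defining event of $\widetilde{C}_n$ reads $\{r^j_{i,n} \leq \lceil n u^j \rceil \text{ for all } j\}$. The sole difference between the two is the replacement of $\lfloor n u^j \rfloor$ by $\lceil n u^j \rceil$, and these two integers coincide unless $n u^j \notin \mathbb{Z}$, in which case they differ by exactly one.

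The key step is to bound the resulting discrepancy. Because $\lfloor n u^j \rfloor \leq \lceil n u^j \rceil$ in every coordinate, the index set defining $C_n$ is contained in the one defining $\widetilde{C}_n$, whence $C_n(u^1, \ldots, u^d) \leq \widetilde{C}_n(u^1, \ldots, u^d)$. An index $i$ can lie in the difference of the two sets only if $r^j_{i,n} = \lceil n u^j \rceil = \lfloor n u^j \rfloor + 1$ for at least one $j$. Since for each fixed coordinate $j$ the ranks $r^j_{1,n}, \ldots, r^j_{n,n}$ form a permutation of $\{1, \ldots, n\}$, there is exactly one index $i$ attaining the value $\lceil n u^j \rceil$; hence the difference set contains at most $d$ indices. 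This yields the deterministic bound
\begin{equation*}
 0 \leq \widetilde{C}_n(u^1, \ldots, u^d) - C_n(u^1, \ldots, u^d) \leq \frac{d}{n},
\end{equation*}
valid $\Prob$-a.s.\ and, in fact, uniformly in $(u^1, \ldots, u^d)$.

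Multiplying by $\sqrt{n}$ gives $\sqrt{n}\,\bigl|\widetilde{C}_n(u^1,\ldots,u^d) - C_n(u^1,\ldots,u^d)\bigr| \leq d/\sqrt{n} \to 0$, so $\sqrt{n}(C_n - \widetilde{C}_n) \topr 0$, and Lemma \ref{jac} transfers the weak limit $G_C$ from $\sqrt{n}(\widetilde{C}_n - C)$ to $\sqrt{n}(C_n - C)$, establishing \eqref{conv}. The only genuinely delicate point is the bookkeeping in the second and third steps, namely correctly identifying via the rank representation that the two index sets differ by at most one rank per coordinate; once this is in place the estimate is immediate and, pleasingly, is almost sure rather than merely in probability, so that the functional version of the statement would follow in exactly the same way from the uniform bound.
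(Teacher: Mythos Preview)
Your proof is correct and follows essentially the same strategy as the paper: establish the deterministic uniform bound $\lvert \widetilde{C}_n - C_n\rvert \le d/n$ and then invoke Lemma~\ref{jac} together with Proposition~\ref{tsu}. The paper phrases the comparison via the observation that $C_n$ and $\widetilde{C}_n$ coincide on the grid $\{(i_1/n,\ldots,i_d/n)\}$ and then bounds the oscillation of $\widetilde{C}_n$ between adjacent grid points, whereas you obtain the same $d/n$ bound more directly by expressing both indicator events through the ranks and counting the at most $d$ indices on which they differ; the two arguments are minor repackagings of one another.
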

\begin{proof}
We only have to show that the supremum of the difference of $C_n$ and $\widetilde{C}_n$ vanishes for $n \rightarrow \infty$ to apply Lemma \ref{jac}, which completes the proof. \ Note that $C_n$ and $\widetilde{C}_n$ coincide on the grid $\{(i_1/n, \ldots, i_d/n), 1 \leq i_1, \ldots, i_d \leq n\}$. It follows that
\begin{align*}
 &\sup_{u^1, \ldots, u^d} |\widetilde{C}_n(u^1, \ldots, u^d) - C_n(u^1, \ldots, u^d)|\\ 
 \leq &\max_{1 \leq i^1, \ldots, i^d \leq n} \Bigl| \widetilde{C}_n \Bigl(\frac{i_1}{n}, \ldots, \frac{i_d}{n} \Bigr) -  \widetilde{C}_n \Bigl(\frac{i_1 - 1}{n}, \ldots, \frac{i_d - 1}{n} \Bigr) \Bigr| \leq \frac{d}{n}.
\end{align*}
Thus, $\sup_{u^1, \ldots, u^d} |\widetilde{C}_n(u^1, \ldots, u^d) - C_n(u^1, \ldots, u^d)| \rightarrow 0$ for $n \rightarrow \infty$ and \eqref{conv} follows.
\end{proof}
In the sequel, all $U^i, i = 1, \ldots, d$ are uniformly distributed random variables on $[0,1]$ and all integrals have to be understood in the sense of Lebesgue-Stieltjes. Note that the next theorem is an extension of \cite[Theorem 6]{fermanian} from the case of bivariate to the case of multi-variate random vectors $U = (U^1, \ldots, U^d)$.

\begin{theorem}\label{T1}
 Let the copula $C$ of $(U^1, \ldots, U^d)$ have continuous partial derivatives and let $f : [0,1]^d \to \mathbb{R}$ be a right-continuous function of bounded variation in the sense of Hardy-Krause. Then
\begin{equation*}
 \frac{1}{\sqrt{n}} \sum_{i = 1}^n \Bigl(f(F^1_n(U^1_i), \ldots, F^d_n(U^d_i)) - \E [f(U^1, \ldots, U^d)] \Bigr) \tod \int_{[0,1]^d} G_C (u^1, \ldots, u^d) d\widehat{f}(u^1, \ldots, u^d),
\end{equation*}
where the function $\widehat{f}: [0,1]^d \to \mathbb{R}$ is defined by:
\begin{align}\label{fdach}
 \widehat{f}(u^1, \ldots, u^d) = \left \{ \begin{array}{cl} 0 & \text{ if at least one } u^j = 1, \text{ for } j = 1, \ldots, d,\\ 
	f(u^1, \ldots, u^d) & \text{ otherwise. } \end{array} \right.
\end{align}
Furthermore, the limit distribution is Gaussian.
\end{theorem}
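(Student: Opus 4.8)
The plan is to recognise the normalised sum as a Lebesgue--Stieltjes integral of $f$ against the centred empirical process $C_n - C$, to transfer the randomness onto the deterministic variation measure of $f$ by a multidimensional integration by parts, and then to push the weak convergence from Proposition \ref{prop2} through a bounded linear functional. First I would note that the points $(F^1_n(U^1_i), \ldots, F^d_n(U^d_i))$, $i = 1, \ldots, n$, have empirical distribution function $C_n$ as defined in \eqref{Cn}, so that $\frac{1}{n}\sum_{i=1}^n f(F^1_n(U^1_i), \ldots, F^d_n(U^d_i)) = \int_{[0,1]^d} f \, dC_n$. Since the marginals are uniform and $C$ is the copula of $U$, we also have $\E[f(U^1, \ldots, U^d)] = \int_{[0,1]^d} f \, dC$, and hence the expression under consideration equals $\sqrt{n} \int_{[0,1]^d} f \, d(C_n - C)$.

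Next I would apply a multidimensional integration-by-parts formula for Lebesgue--Stieltjes integrals; the hypothesis that $f$ is of bounded variation in the sense of Hardy--Krause is precisely what makes $d\widehat{f}$ a finite signed measure and renders this formula available. The identity expresses $\int f \, d(C_n - C)$ through $\int (C_n - C) \, d\widehat{f}$ together with a collection of lower-dimensional integrals over the faces of $[0,1]^d$. All of these boundary terms vanish: on the lower faces $\{u^j = 0\}$ both $C_n$ and $C$ equal zero, being distribution functions of copulas, while on the upper faces $\{u^j = 1\}$ the integrand is $\widehat{f}$, which is defined in \eqref{fdach} to be identically $0$ there. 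This is exactly the purpose of replacing $f$ by $\widehat{f}$: the modification on the upper faces annihilates precisely the boundary contributions produced by the integration by parts, so that one is left with
\begin{equation*}
 \sqrt{n} \int_{[0,1]^d} f \, d(C_n - C) = \pm \, \sqrt{n} \int_{[0,1]^d} (C_n - C) \, d\widehat{f},
\end{equation*}
where the sign depends only on the orientation convention and is irrelevant for the limiting law, since $G_C$ is a centred Gaussian field and therefore $\int G_C \, d\widehat{f}$ and $-\int G_C \, d\widehat{f}$ are equal in distribution.

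It then remains to pass to the limit in $\Phi(\sqrt{n}(C_n - C))$, where $\Phi(g) := \int_{[0,1]^d} g \, d\widehat{f}$. The functional $\Phi$ is linear, and by the Koksma--Hlawka bound $|\Phi(g)| \leq \|g\|_\infty V(\widehat{f})$ with $V(\widehat{f}) \leq V(f) < \infty$, it is Lipschitz continuous with respect to the supremum norm. Proposition \ref{prop2} gives $\sqrt{n}(C_n - C) \tod G_C$ at the level of the process; since $G_C$ has continuous sample paths almost surely, being a continuous transformation of a Brownian sheet, the continuous mapping theorem applies and yields $\Phi(\sqrt{n}(C_n - C)) \tod \Phi(G_C) = \int_{[0,1]^d} G_C \, d\widehat{f}$. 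For the final Gaussianity claim I would approximate $\int G_C \, d\widehat{f}$ by Riemann--Stieltjes sums along a grid refinement: each such sum is a finite linear combination of the jointly Gaussian values of the field $G_C$ and is therefore Gaussian, and these sums converge in $L^2$, hence in distribution, to the integral, so that the limit, being a distributional limit of Gaussians, is Gaussian.

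I expect the main obstacle to be the integration-by-parts step in arbitrary dimension $d$: writing down the correct multidimensional Stieltjes identity for Hardy--Krause functions and carefully accounting for all $2^d - 1$ boundary contributions over the faces, so as to confirm that they all cancel against the lower-face vanishing of $C_n - C$ and the upper-face vanishing of $\widehat{f}$. A secondary point demanding care is the topology underlying the weak convergence in Proposition \ref{prop2}: one must check it is strong enough, namely uniform with a sample-continuous limit, for the supremum-norm-continuous functional $\Phi$ to be legitimately applied through the continuous mapping theorem, and one should confirm that $\int G_C \, d\widehat{f}$ is almost surely well defined as the integral of the bounded continuous field $G_C$ against the finite signed measure $d\widehat{f}$.
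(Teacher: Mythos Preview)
Your proposal is correct and mirrors the paper's proof: express the normalised sum as $\sqrt{n}\int f\,d(C_n-C)$, apply Zaremba's multidimensional integration-by-parts formula, use the vanishing of $C_n-C$ on the lower faces and of $\widehat f$ on the upper faces to eliminate every lower-dimensional boundary contribution, and then push Proposition~\ref{prop2} through the bounded linear functional $g\mapsto\int g\,d\widehat f$ via the continuous mapping theorem. Two small slips worth tightening: the paper first replaces $f$ by $\widehat f$ in the integral (legitimate since $C$ is continuous and only $O(1)$ of the empirical points hit an upper face) and \emph{then} integrates by parts, whereas your order is reversed, which obscures why $\widehat f$ rather than $f$ shows up in the boundary terms; and your inequality $V(\widehat f)\le V(f)$ is false in general (take $f\equiv 1$), although the only thing actually needed is $V(\widehat f)<\infty$.
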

\begin{proof}
By definition $\widehat{f}$ is right-continuous and of bounded variation in the sense of Hardy-Krause. Furthermore, it follows that almost surely
\begin{align*}
\frac{1}{\sqrt{n}} &\sum_{i = 1}^n \Bigl(f(F^1_n(U^1_i), \ldots, F^d_n(U^d_i)) - \E [f(U^1, \ldots, U^d)] \Bigr)\\
 &= \frac{1}{\sqrt{n}} \sum_{i = 1}^n \Bigl(\widehat{f}(F^1_n(U^1_i), \ldots, F^d_n(U^d_i)) - \E [\widehat{f}(U^1, \ldots, U^d)] \Bigr),
\end{align*}
by the fact that $C$ is continuous on $[0,1]^d$.\\
We use a multidimensional integration-by-parts technique proposed by \citep[Proposition 2]{zaremba}. Using the notation of \citep{zaremba} we get
\begin{align}
 \frac{1}{\sqrt{n}} &\sum_{i = 1}^n \Bigl(\widehat{f}(F^1_n(U^1_i), \ldots, F^d_n(U^d_i)) - \E [\widehat{f}(U^1, \ldots, U^d)] \Bigr) \notag\\ 
 &= \sqrt{n} \int_{[0,1]^d} \widehat{f}(u^1, \ldots, u^d) d(C_n - C) (u^1, \ldots, u^d) \notag\\
 &=\sqrt{n} \sum_{k=0}^d (-1)^k \sum_{1,\ldots, d; k} \Delta^*_{j_{k+1}, \ldots, j_{d}} \int_{[0,1]^k} (C_n - C) (u^1, \ldots, u^d) d_{j_{1}, \ldots, j_{k}} \widehat{f}(u^1, \ldots, u^d). \label{parts}
\end{align}
Here $\sum_{1,\ldots, d; k}$ denotes the sum over all possible partitions of the set $\{j_1, \ldots, j_d\}$ into two subsets $\{j_1, \ldots, j_{k}\}$ and $\{j_{k + 1}, \ldots, j_d\}$ of $k$ respectively $d-k$ elements, where each partition is taken exactly once. In the cases $k = 0$ and $k = d$, the sum is interpreted as being reduced to one term.\\
Furthermore, the operator $d_{j_1, \ldots, j_k}$ indicates that the integral only applies to the variables $j_1, \ldots, j_k$. Note that after the application of the integral with respect to $d_{j_1, \ldots, j_k} \widehat{f}(u^1, \ldots, u^d)$, the integrated function is a function in $d-k$ variables. Furthermore for a function $g$ of $d-k$ variables, the operator $\Delta^*_{j_{k+1}, \ldots, j_{d}}$ is given by 
\begin{equation*}
 \Delta^*_{j_{k+1}, \ldots, j_{d}} g(j_{k+1}, \ldots, j_{d}) = \sum_{\{i_1, \ldots, i_{d-k}\} \in \{0,1\}^{d-k}} (-1)^m g(i_1, \ldots, i_{d-k}),
\end{equation*}
where $m$ denotes the number of zeros in $\{i_1, \ldots, i_{d-k}\}$. This means that, for $j \notin \{j_1, \ldots, j_k\}$
\begin{align*}
 \Delta^*_{j} &\int_{[0,1]^{d-k}} (C_n - C)(u^1, \ldots, u^d) d_{j_{1}, \ldots, j_{k}}\widehat{f}(u^1, \ldots, u^d)\\ 
 &= \int_{[0,1]^{d-k}} (C_n - C)(u^1,\ldots, u^{j-1}, 1, u^{j+1}, \ldots, u^d) d_{j_{1}, \ldots, j_{k}} \widehat{f}(u^1,\ldots, u^{j-1}, 1, u^{j+1}, \ldots, u^d)\\ 
 &- \int_{[0,1]^{d-k}} (C_n - C)(u^1,\ldots, u^{j-1}, 0, u^{j+1}, \ldots, u^d) d_{j_{1}, \ldots, j_{k}} \widehat{f}(u^1,\ldots, u^{j-1}, 0, u^{j+1}, \ldots, u^d)
\end{align*}
and
\begin{equation*}
 \Delta^*_{j_{k+1}, \ldots, j_d} = \Delta^*_{j_{k+1}} \ldots \Delta^*_{j_d}.
\end{equation*}
Thus
\begin{align*}
\sqrt{n} \sum_{k=0}^d (-1)^k &\sum_{1,\ldots, d; k} \Delta^*_{j_{k+1}, \ldots, j_{d}} \int_{[0,1]^k} (C_n - C) (u^1, \ldots, u^d) d_{j_{1}, \ldots, j_{k}}\widehat{f}(u^1, \ldots, u^d)\\
 &=\sqrt{n} \sum_{k=0}^{d-1} (-1)^k \sum_{1,\ldots, d; k} \Delta^*_{j_{k+1}, \ldots, j_{d}} \int_{[0,1]^k} (C_n - C) (u^1, \ldots, u^d) d_{j_{1}, \ldots, j_{k}}\widehat{f}(u^1, \ldots, u^d)\\ 
 &+ \sqrt{n} (-1)^d \int_{[0,1]^d} (C_n - C)(u^1, \ldots, u^d) d\widehat{f}(u^1, \ldots, u^d)\\
 &= \sqrt{n} (-1)^d \int_{[0,1]^d} (C_n - C) (u^1, \ldots, u^d) d\widehat{f}(u^1, \ldots, u^d).
\end{align*}
The term
\begin{equation*}
 \sqrt{n} \sum_{k=0}^{d-1} (-1)^k \sum_{1,\ldots, d; k} \Delta^*_{j_{k+1}, \ldots, j_{d}} \int_{[0,1]^k} (C_n - C) (u^1, \ldots, u^d) d_{j_{1}, \ldots, j_{k}}\widehat{f}(u^1, \ldots, u^d)
\end{equation*}
vanishes because each of its terms is equal to zero due to at least one of the following two reasons: firstly, at least one $u^j, j = 1, \ldots, d$ is equal to one and therefore $\widehat{f}(u^1, \ldots, u^d) = 0$ by definition, or, secondly, at least one $u^j, j = 1, \ldots, d$ is equal to zero, hence $C_n(u^1, \ldots, u^d) = C(u^1, \ldots, u^d) = 0$.\\
Thus, by the continuous mapping theorem and \eqref{conv}, it follows that
\begin{align*}
\frac{1}{\sqrt{n}} &\sum_{i = 1}^n \Bigl(f(F^1_n(U^1_i), \ldots, F^d_n(U^d_i)) - \E [f(U^1, \ldots, U^d)] \Bigr)\\ 
 &= (-1)^d \sqrt{n} \int_{[0,1]^d} (C_n - C) (u^1, \ldots, u^d) d\widehat{f}(u^1, \ldots, u^d)\\ 
 &\tod \int_{[0,1]^d} G_C (u^1, \ldots, u^d) d\widehat{f}(u^1, \ldots, u^d).
\end{align*}
Since $\int_{[0,1]^d} G_C (u^1, \ldots, u^d) d\widehat{f}(u^1, \ldots, u^d)$ is a continuous, linear transformation of a tight Gaussian process, it follows that the limiting distribution is Gaussian.
\end{proof}
\begin{rem}
The reason for using the function $\widehat{f}$ instead of $f$ is that the integrals of dimension $k = 2,\ldots, d-1$ in \eqref{parts} are in general not vanishing. The one-dimensional integrals are zero for every right-continuous function of bounded variation $f$ because of special properties of the function $C_n$, for more details see \citep{fermanian}. In particular, this means that in the two-dimensional case it is sufficient to assume
\begin{equation*}
 \widehat{f}(x) = f(x), \quad x \in \mathbb{R}^2.
\end{equation*}
With this assumption instead of \eqref{fdach} and $d=2$, Theorem \ref{T1} is equivalent to \citep[Theorem 6]{fermanian}. We use the function $\widehat{f}$ to get a more convenient representation for the limit variance of the LHSD technique, which we state in the next theorem.
\end{rem}
\begin{theorem}\label{T2}
Under the assumptions and notations of Theorem \ref{T1}, we have
\begin{equation}\label{clt2}
 \frac{1}{\sqrt{n}} \sum_{i = 1}^n \Bigl(f(V^1_{i,n}, \ldots, V^d_{i,n}) - \E [f(U^1, \ldots, U^d)] \Bigr) \tod N(0, \sigma^2_{LHSD}),
\end{equation}
where
\begin{equation}\label{sigma}
 \sigma^2_{LHSD} = \int_{[0,1]^{2d}} \E \Bigl[ G_C (u^1, \ldots, u^d) G_C (\overline{u}^1, \ldots, \overline{u}^d)\Bigr] d\widehat{f}(u^1, \ldots, u^d)d\widehat{f}(\overline{u}^1, \ldots, \overline{u}^d).
\end{equation}
\end{theorem}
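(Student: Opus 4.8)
The plan is to read off the limiting law from Theorem \ref{T1} and then transfer it to the LHSD sum by a Slutsky-type argument. First I would observe that the limiting variable $X := \int_{[0,1]^d} G_C(u^1, \ldots, u^d)\, d\widehat f(u^1, \ldots, u^d)$ appearing in Theorem \ref{T1} is centred Gaussian: Gaussianity is already asserted in Theorem \ref{T1}, and centredness follows since $G_C$ is a centred field. Its variance is obtained by squaring, taking expectations and interchanging $\E$ with the double integral via Fubini, which is legitimate because the covariance \eqref{cov} is bounded and $\widehat f$ has finite Hardy-Krause variation:
\[
\E[X^2] = \int_{[0,1]^{2d}} \E\bigl[G_C(u^1, \ldots, u^d)\, G_C(\overline u^1, \ldots, \overline u^d)\bigr]\, d\widehat f(u^1, \ldots, u^d)\, d\widehat f(\overline u^1, \ldots, \overline u^d) = \sigma^2_{LHSD}.
\]
Thus $X \sim N(0, \sigma^2_{LHSD})$, so that Theorem \ref{T1} states precisely that the sum built from $f(F^1_n(U^1_i), \ldots, F^d_n(U^d_i))$ converges in distribution to $N(0,\sigma^2_{LHSD})$.

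It then remains to show that replacing $F^j_n(U^j_i)$ by $V^j_{i,n}$ leaves the limit unchanged. Since $F^j_n(U^j_i) = r^j_{i,n}/n$ and, by \eqref{vlhsd}, $V^j_{i,n} = (r^j_{i,n} - 1 + \eta^j_{i,n})/n$ with $\eta^j_{i,n} \in [0,1]$, we have $0 \le F^j_n(U^j_i) - V^j_{i,n} \le 1/n$ in every coordinate. As $f$ is merely of bounded variation, a termwise comparison is unavailable, so instead I would rerun the Zaremba integration-by-parts computation from the proof of Theorem \ref{T1}, this time with the empirical distribution $\overline C_n$ of the LHSD points in place of $C_n$ (the lower-dimensional terms vanishing for the same two reasons, using in addition that $V^j_{i,n} > 0$). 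This yields
\[
\frac{1}{\sqrt n}\sum_{i=1}^n \bigl(f(V^1_{i,n}, \ldots, V^d_{i,n}) - \E[f(U^1, \ldots, U^d)]\bigr) = (-1)^d \sqrt n \int_{[0,1]^d} (\overline C_n - C)\, d\widehat f,
\]
so the difference between the LHSD sum and the sum of Theorem \ref{T1} equals $(-1)^d \sqrt n \int_{[0,1]^d} (\overline C_n - C_n)\, d\widehat f$, which is bounded in absolute value by $\sqrt n\, \bigl(\sup_{u^1, \ldots, u^d} |\overline C_n - C_n|\bigr)\, V(\widehat f)$.

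The step I expect to be the main obstacle is the uniform estimate $\sup_{u^1, \ldots, u^d} |\overline C_n(u^1, \ldots, u^d) - C_n(u^1, \ldots, u^d)| \le d/n$, which I would prove in the spirit of Proposition \ref{prop2}. Since $V^j_{i,n} \le r^j_{i,n}/n$ one has $\overline C_n \ge C_n$ pointwise, and the $i$-th summand of their difference is non-zero at a point $u$ only when $u^j$ lies in $[(r^j_{i,n}-1)/n, r^j_{i,n}/n)$ for some coordinate $j$; because $r^j_{1,n}, \ldots, r^j_{n,n}$ is a permutation of $\{1, \ldots, n\}$, at most one index $i$ can meet this for each fixed $j$, so at most $d$ of the $n$ summands survive, each of size $1/n$. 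Consequently the difference of the two sums is at most $d\,V(\widehat f)/\sqrt n$, which tends to $0$ deterministically and hence $\topr 0$. Applying Lemma \ref{jac} with $X_n$ the sum of Theorem \ref{T1} and $Y_n$ the LHSD sum then gives \eqref{clt2}.
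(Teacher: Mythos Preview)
Your argument is correct and follows the same overall architecture as the paper's proof: identify the Gaussian limit of Theorem \ref{T1}, compute its variance via Fubini, and transfer the limit to the LHSD sum through Lemma \ref{jac} after showing the difference of the two normalised sums tends to zero.

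The one genuine divergence is in how that last difference is controlled. The paper invokes \citep[Corollary 1]{leonov} as a black box to get
\[
\Bigl|\sum_{i=1}^n \bigl[f(V^1_{i,n},\ldots,V^d_{i,n}) - f(F^1_n(U^1_i),\ldots,F^d_n(U^d_i))\bigr]\Bigr| \le V(f),
\]
whence division by $\sqrt n$ finishes. You instead rerun the Zaremba integration-by-parts with $\overline C_n$ and reduce to the discrepancy bound $\sup|\overline C_n - C_n|\le d/n$, which you establish by the clean combinatorial observation that in each coordinate only one rank can fall in a given stratum. This is more self-contained (no appeal to Leonov's paper) and yields a bound of order $dV(\widehat f)/\sqrt n$, which is equally sufficient. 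The mild price is the extra hypothesis $V^j_{i,n}>0$ needed so that $\overline C_n$ vanishes on the lower faces and the Zaremba boundary terms drop out; this holds for the standard choices of $\eta^j_{i,n}$ (e.g.\ $\eta\equiv 1/2$ or $\eta$ uniform), and in any case the residual boundary contribution is $O(1/n)$ and harmless. For the variance identity the paper is a touch more explicit, decomposing $\widehat f$ into monotone pieces via \citep[Theorem 3]{leonov} before applying Fubini, but your justification (bounded covariance kernel, finite total variation of $d\widehat f$) amounts to the same thing.
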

\begin{proof}
We want to apply Theorem \ref{T1} together with Lemma \ref{jac}, so we have to show that
\begin{equation*}
 \frac{1}{\sqrt{n}} \left| \sum_{i = 1}^n \Bigl[ f(V^1_{i,n}, \ldots, V^d_{i,n}) - f(F^1_n(U^1_i), \ldots, F^d_n(U^d_i))\Bigr] \right| \rightarrow 0, \qquad \text{ as } n \rightarrow \infty.
\end{equation*}
By \citep[Corollary 1]{leonov}
\begin{equation*}
 \left| \sum_{i=1}^n \Bigl[ f(V^1_{i,n}, \ldots, V^d_{i,n}) - f(F_n^1(U^1_i), \ldots, F^d_n(U^d_i)) \Bigr] \right| \leq V(f) < \infty,
\end{equation*}
where $V(f)$ is the Hardy-Krause variation of $f$. Hence
\begin{equation*}
 \frac{1}{\sqrt{n}} \left| \sum_{i=1}^n \Bigl[ f(V^1_{i,n}, \ldots, V^d_{i,n}) - f(F_n^1(U^1_i), \ldots, F^d_n(U^d_i))\Bigr] \right| \rightarrow 0, \qquad \text{ as } n \rightarrow \infty,
\end{equation*}
which, together with Lemma \ref{jac} and Theorem \ref{T1}, proves equation \eqref{clt2}.\\
To derive equation \eqref{sigma} we apply Fubini's theorem to $\E [(\int_{[0,1]^d} G_C (u^1, \ldots, u^d) d\widehat{f}(u^1, \ldots, u^d))^2]$. By \citep[Theorem 3]{leonov} a function of bounded variation $\widehat{f}$ can always be written as the difference of two completely monotone functions $g,h$ and therefore an integral with respect to $\widehat{f}$ can be written as a difference of two integrals with respect to positive measures $g,h$. Thus 
\begin{align*}
 \E \Bigl[ \Bigl(&\int_{[0,1]^d} G_C (u^1, \ldots, u^d) d\widehat{f}(u^1, \ldots, u^d) \Bigr)^2 \Bigr] =\\ 
 &= \E \Bigl[ \Bigl(\int_{[0,1]^d} G_C (u^1, \ldots, u^d) d\widehat{f}(u^1, \ldots, u^d) \Bigr) \cdot \Bigl(\int_{[0,1]^d} G_C (\overline{u}^1, \ldots, \overline{u}^d) d\widehat{f}(\overline{u}^1, \ldots, \overline{u}^d) \Bigr) \Bigr]\\
 &= \E \Bigl[ \Bigl(\int_{[0,1]^d} G_C (u^1, \ldots, u^d) dg(u^1, \ldots, u^d) - \int_{[0,1]^d} G_C (u^1, \ldots, u^d) dh(u^1, \ldots, u^d) \Bigr)\\
 &\cdot \Bigl(\int_{[0,1]^d} G_C (\overline{u}^1, \ldots, \overline{u}^d) dg(\overline{u}^1, \ldots, \overline{u}^d) - \int_{[0,1]^d} G_C (\overline{u}^1, \ldots, \overline{u}^d) dh(\overline{u}^1, \ldots, \overline{u}^d) \Bigr) \Bigr]\\
 &= \E \Bigl[ \Bigl(\int_{[0,1]^{2d}} G_C (u^1, \ldots, u^d) G_C (\overline{u}^1, \ldots, \overline{u}^d) dg(u^1, \ldots, u^d)dg(\overline{u}^1, \ldots, \overline{u}^d)\\
 &- \int_{[0,1]^d} G_C (u^1, \ldots, u^d) G_C (\overline{u}^1, \ldots, \overline{u}^d) dh(u^1, \ldots, u^d) dg(\overline{u}^1, \ldots, \overline{u}^d)\\
 &- \int_{[0,1]^d} G_C (u^1, \ldots, u^d) G_C (\overline{u}^1, \ldots, \overline{u}^d) dg(u^1, \ldots, u^d) dh(\overline{u}^1, \ldots, \overline{u}^d) \\
 &+ \int_{[0,1]^d} G_C (u^1, \ldots, u^d) G_C (\overline{u}^1, \ldots, \overline{u}^d) dh(u^1, \ldots, u^d) dh(\overline{u}^1, \ldots, \overline{u}^d) \Bigr) \Bigr]  \\
 &=  \int_{[0,1]^{2d}} \E \Bigl[ G_C (u^1, \ldots, u^d) G_C (\overline{u}^1, \ldots, \overline{u}^d)\Bigr] dg(u^1, \ldots, u^d)dg(\overline{u}^1, \ldots, \overline{u}^d)\\
 &- \int_{[0,1]^d} \E \Bigl[ G_C (u^1, \ldots, u^d) G_C (\overline{u}^1, \ldots, \overline{u}^d)\Bigr] dh(u^1, \ldots, u^d) dg(\overline{u}^1, \ldots, \overline{u}^d)\\
 &- \int_{[0,1]^d} \E \Bigl[ G_C (u^1, \ldots, u^d) G_C (\overline{u}^1, \ldots, \overline{u}^d)\Bigr] dg(u^1, \ldots, u^d) dh(\overline{u}^1, \ldots, \overline{u}^d) \\
 &+ \int_{[0,1]^d} \E \Bigl[ G_C (u^1, \ldots, u^d) G_C (\overline{u}^1, \ldots, \overline{u}^d)\Bigr] dh(u^1, \ldots, u^d) dh(\overline{u}^1, \ldots, \overline{u}^d) \\
 &= \int_{[0,1]^{2d}} \E \Bigl[ G_C (u^1, \ldots, u^d) G_C (\overline{u}^1, \ldots, \overline{u}^d)\Bigr] d\widehat{f}(u^1, \ldots, u^d)d\widehat{f}(\overline{u}^1, \ldots, \overline{u}^d),\\
\end{align*}
where the use of Fubini's theorem is justified since $\widehat{f}$ is bounded and $\E [XY] < \infty$ for two jointly normal random variables $X$ and $Y$.
\end{proof}
\begin{rem}\label{remark}
Note that by \eqref{gc} and \eqref{cov} the expression for $\sigma^2_{LHSD}$ in equation \eqref{sigma} can be represented in terms of $C$. Additionally, further simplifications can be given for the following terms:
\begin{align*}
 &\E [B_C(u^1, \ldots, u^d) \cdot B_C(1, \ldots, 1, \overline{u}^j, 1, \ldots, 1)]\\ 
 &\qquad = C((u^1, \ldots, u^{j-1}, u^j \wedge \overline{u}^j, u^{j + 1} \ldots, u^d)) - C(u^1, \ldots, u^d) \overline{u}^j,\\
 &\E [B_C(1, \ldots, 1, u^i, 1, \ldots, 1) \cdot B_C(1, \ldots, 1, \overline{u}^j, 1, \ldots, 1)]\\ 
 &\qquad = C((1, \ldots,1, u^{i}, 1, \ldots, 1, \overline{u}^j, 1,  \ldots, 1)) - u^i \overline{u}^j,\\
 &\E [B_C(1, \ldots, 1, u^j, 1, \ldots, 1) \cdot B_C(1, \ldots, 1, \overline{u}^j, 1, \ldots, 1)] = u^j \wedge \overline{u}^j - u^j \overline{u}^j,
\end{align*}
since $C(1, \ldots, 1, u^j, 1, \ldots, 1) = u^j$ for all $j = 1, \ldots, d$.
\end{rem}
It is important to know if the LHSD estimator has a smaller variance than the Monte Carlo estimator. The variance of a standard Monte Carlo estimator is given by
\begin{equation*}
  \sigma^2_{MC} =  \int_{[0,1]^d} f(u^1, \ldots, u^d)^2 dC(u^1, \ldots, u^d) - \Bigl(\int_{[0,1]^d} f(u^1, \ldots, u^d) dC(u^1, \ldots, u^d) \Bigr)^2.
\end{equation*}
We use this fact to establish a relation between $\sigma_{MC}^2$ and $\sigma_{LHSD}^2$.
\begin{proposition}
Let the copula $C$ of $(U^1, \ldots, U^d)$ have continuous partial derivatives, let $f : [0,1]^d \to \mathbb{R}$ be a right-continuous function of bounded variation in the sense of Hardy-Krause and let $\widehat{f}$ be as defined in Theorem \ref{T1}. Set $\partial_j C(u^1, \ldots, u^d) = \frac{\partial C(u^1, \ldots, u^d)}{\partial u^j}$ and 
\begin{equation*}
C_{i, j}(u^i, \overline{u}^j) = \left \{ \begin{array}{cl} C(1, \ldots,1, u^{i}, 1, \ldots, 1, \overline{u}^j, 1,  \ldots, 1), & i \neq j\\ 
	u^i \wedge \overline{u}^j, & i = j. \end{array} \right.
\end{equation*}
Then
\begin{align}
 &\sigma^2_{LHSD} = \sigma^2_{MC} \notag\\ 
& + \int_{[0,1]^{2d}} 2 \sum_{j = 1}^d \partial_j C(u^1, \ldots, u^d) \Bigl(C(\overline{u}^1, \ldots, \overline{u}^d) u^j - C(\overline{u}^1, \ldots, \overline{u}^{j -1}, \overline{u}^{j} \wedge u^j, \overline{u}^{j +1}, \ldots, \overline{u}^d) \Bigr) \notag\\
&+ \sum_{j = 1}^d \sum_{i = 1}^d \partial_j C(\overline{u}^1, \ldots, \overline{u}^d)  \partial_i C(u^1, \ldots, u^d) \Bigl( C_{i, j}(u^i, \overline{u}^j) - u^i \overline{u}^j \Bigr) d\widehat{f} (u^1, \ldots, u^d) d\widehat{f}(\overline{u}^1, \ldots, \overline{u}^d) . \label{vardiff}
\end{align}
\end{proposition}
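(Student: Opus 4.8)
The plan is to start from the representation \eqref{sigma} of $\sigma^2_{LHSD}$ and expand the integrand $\E[G_C(u^1,\ldots,u^d)\,G_C(\overline{u}^1,\ldots,\overline{u}^d)]$ by inserting the definition \eqref{gc} of $G_C$. Since $G_C$ is a linear combination of the pinned sheet $B_C$ evaluated at $(u^1,\ldots,u^d)$ and at the $d$ marginal points $(1,\ldots,1,u^j,1,\ldots,1)$, bilinearity of the covariance splits the product into four groups: the pure term $\E[B_C(u)\,B_C(\overline{u})]$, the two mixed sums $\sum_j \partial_j C(\overline{u})\,\E[B_C(u)\,B_C(1,\ldots,\overline{u}^j,\ldots,1)]$ and $\sum_i \partial_i C(u)\,\E[B_C(1,\ldots,u^i,\ldots,1)\,B_C(\overline{u})]$, and the double sum $\sum_{i,j}\partial_i C(u)\,\partial_j C(\overline{u})\,\E[B_C(1,\ldots,u^i,\ldots,1)\,B_C(1,\ldots,\overline{u}^j,\ldots,1)]$. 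Each covariance is evaluated with \eqref{cov} together with the simplifications collected in Remark \ref{remark}, after which I integrate termwise against $d\widehat{f}(u)\,d\widehat{f}(\overline{u})$, Fubini being available exactly as in the proof of Theorem \ref{T2}.

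The crux is to identify the pure contribution $\int_{[0,1]^{2d}}\bigl[C((u^1,\ldots,u^d)\wedge(\overline{u}^1,\ldots,\overline{u}^d)) - C(u)\,C(\overline{u})\bigr]\,d\widehat{f}(u)\,d\widehat{f}(\overline{u})$ with $\sigma^2_{MC}$. For the factorised part I would reuse the integration-by-parts identity established inside the proof of Theorem \ref{T1}: the same manipulation, with $C_n-C$ replaced by $C$ alone, has all boundary terms vanish for the identical dichotomy (either some $u^j=1$, so $\widehat{f}=0$, or some $u^j=0$, so $C=0$), which yields $\int_{[0,1]^d} C(u)\,d\widehat{f}(u) = (-1)^d\,\E[f(U^1,\ldots,U^d)]$; squaring removes the sign and gives $\int\int C(u)C(\overline{u})\,d\widehat{f}\,d\widehat{f} = (\E[f])^2$. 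For the min part I would write $C((u)\wedge(\overline{u})) = \E[\mathbf{1}_{\{U\leq u\}}\mathbf{1}_{\{U\leq \overline{u}\}}]$, pull the expectation outside, and observe that the inner integral $\int \mathbf{1}_{\{U\leq u\}}\,d\widehat{f}(u)$ is the $\widehat{f}$-measure of the upper box $\prod_j[U^j,1]$; in the alternating vertex sum every corner with a coordinate equal to $1$ kills $\widehat{f}$, so only the corner $(U^1,\ldots,U^d)$ survives and the integral equals $(-1)^d f(U^1,\ldots,U^d)$ almost surely, whence the min part is $\E[f(U^1,\ldots,U^d)^2]$. Together the pure term collapses to $\E[f^2]-(\E f)^2 = \sigma^2_{MC}$.

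Finally I would assemble the remaining pieces. The two mixed sums, after integration and the dummy relabelling $u\leftrightarrow\overline{u}$ (the double integral is symmetric in its two variables), coincide, which produces the factor $2$ and the middle line of \eqref{vardiff}, namely $2\sum_j \partial_j C(u)\bigl(C(\overline{u})u^j - C(\overline{u}^1,\ldots,\overline{u}^{j-1},\overline{u}^j\wedge u^j,\overline{u}^{j+1},\ldots,\overline{u}^d)\bigr)$. The double sum reproduces the last line of \eqref{vardiff} directly, the case distinction built into $C_{i,j}$ accounting for $i\neq j$ versus $i=j$ precisely as in the last two displays of Remark \ref{remark}. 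I expect the two integration-by-parts/box-measure identities behind the pure term to be the only genuine obstacle; once the a.s.\ evaluation $\int \mathbf{1}_{\{U\leq u\}}\,d\widehat{f}(u)=(-1)^d f(U)$ and the vanishing of the boundary terms are justified, the remainder is bilinear bookkeeping and the symmetry argument.
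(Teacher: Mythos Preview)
Your proposal is correct and follows the same overall strategy as the paper: expand $\E[G_C(u)G_C(\overline{u})]$ via \eqref{gc} and \eqref{cov}, identify the pure $B_C$-covariance contribution $\int\int\bigl[C(u\wedge\overline{u})-C(u)C(\overline{u})\bigr]\,d\widehat{f}(u)\,d\widehat{f}(\overline{u})$ with $\sigma^2_{MC}$, and read off the two mixed sums and the double sum directly from Remark~\ref{remark}, the symmetry $u\leftrightarrow\overline{u}$ producing the factor $2$.

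The only difference is in how the $\sigma^2_{MC}$ identification is carried out. The paper observes that both $(u,\overline{u})\mapsto C(u\wedge\overline{u})$ and $(u,\overline{u})\mapsto C(u)C(\overline{u})$ are themselves copulas in $2d$ variables (the former being the distribution of $(U^1,\ldots,U^d,U^1,\ldots,U^d)$), so the Zaremba integration-by-parts from the proof of Theorem~\ref{T1} applies once in dimension $2d$ and moves both pieces simultaneously to integrals against $d\widehat{f}(u)\,d\widehat{f}(\overline{u})$. You instead split the job: two $d$-dimensional integrations-by-parts for the factorised term $C(u)C(\overline{u})$, and a direct box-measure evaluation $\int \mathbf{1}_{\{U\leq u\}}\,d\widehat{f}(u)=(-1)^d f(U)$ for the min term. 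Both routes are valid; the paper's $2d$-copula observation is tidier and sidesteps the a.s.\ box-measure argument you rightly flag as the only point requiring care.
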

\begin{proof}
Note that
\begin{equation*}
 \int_{[0,1]^d} f(u^1, \ldots, u^d)^2 dC(u^1, \ldots, u^d) = \int_{[0,1]^{2d}} f(u^1, \ldots, u^d ) f(\overline{u}^1, \ldots, \overline{u}^d) dC(u^1 \wedge \overline{u}^1, \ldots, u^d \wedge \overline{u}^d),
\end{equation*}
and that the function $C(u^1 \wedge \overline{u}^1, \ldots, u^d \wedge \overline{u}^d)$ is also a copula, which follows by observing that
\begin{align*}
 C(u^1 \wedge \overline{u}^1, \ldots, u^d \wedge \overline{u}^d) &= \Prob (U^1 \leq u^1 \wedge \overline{u}^1, \ldots, U^d \leq u^d \wedge \overline{u}^d)\\
 &= \Prob (U^1 \leq u^1, U^1 \leq \overline{u}^1, \ldots, U^d \leq u^d, U^d \leq \overline{u}^d)
\end{align*}
is a joint probability distribution with uniform marginals.\\
By integration-by-parts like in Theorem \ref{T1} it follows for the variance of the Monte Carlo estimator that
\begin{align*}
 \sigma^2_{MC} &= \int_{[0,1]^d} f(u^1, \ldots, u^d)^2 dC(u^1, \ldots, u^d) - \Bigl(\int_{[0,1]^d} f(u^1, \ldots, u^d) dC(u^1, \ldots, u^d) \Bigr)^2\\
 &= \int_{[0,1]^{2d}} f(u^1, \ldots, u^d) f(\overline{u}^1, \ldots, \overline{u}^d) dC \Bigl( (u^1, \ldots, u^d) \wedge (\overline{u}^1, \ldots, \overline{u}^d) \Bigr)\\ 
 &- \int_{[0,1]^{2d}} f(u^1, \ldots, u^d) f(\overline{u}^1, \ldots, \overline{u}^d) dC(u^1, \ldots, u^d) dC(\overline{u}^1, \ldots, \overline{u}^d)\\
 &= \int_{[0,1]^{2d}} C \Bigl( (u^1, \ldots, u^d) \wedge (\overline{u}^1, \ldots, \overline{u}^d) \Bigr) d\widehat{f}(u^1, \ldots, u^d) d\widehat{f}(\overline{u}^1, \ldots, \overline{u}^d)\\
 &- \int_{[0,1]^{2d}} C(u^1, \ldots, u^d) C(\overline{u}^1, \ldots, \overline{u}^d) d\widehat{f}(u^1, \ldots, u^d) d\widehat{f}(\overline{u}^1, \ldots, \overline{u}^d).
\end{align*}
The proof is completed by using equations \eqref{gc}, \eqref{cov}, \eqref{sigma} and Remark \ref{remark}. 
\end{proof}
\begin{theorem}\label{varcond}
 Let $C$ and $f$ satisfy the assumptions in Theorem \ref{T1} and let $\widehat{f}$ be defined as in Theorem \ref{T1}. Furthermore let the function $f$ be monotone non-decreasing in each argument and $\max_{x \in [0,1]^{d}}(f(x)) \leq 0$. Moreover assume that $C$ satisfies the following conditions:
\begin{align}
 \frac{C(u^1, \ldots, u^d)}{u^j} &\geq \partial_j C(u^1, \ldots, u^d), \quad j \in \{1, \ldots, d\}, \label{partial}\\
 \sum_{i=1, i \neq j}^d \frac{C_{i, j}(u^j, \overline{u}^i)}{\overline{u}^j} &\leq (d-2) u^j + \frac{C(\overline{u}^1, \ldots, \overline{u}^{j -1}, \overline{u}^{j} \wedge u^j, \overline{u}^{j +1}, \ldots, \overline{u}^d)}{ C(\overline{u}^1, \ldots, \overline{u}^d)}, \label{partialsum}
\end{align}
where $u^j \in [0,1], (\overline{u}^1, \ldots, \overline{u}^d), (u^1, \ldots, u^d) \in [0,1]^d$.\\
Then $\sigma^2_{LHSD} \leq \sigma^2_{MC}$.
\end{theorem}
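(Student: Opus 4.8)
The plan is to begin from the explicit representation \eqref{vardiff} of the variance gap and to prove that the double integral on its right-hand side is non-positive. Writing $D := \sigma^2_{LHSD} - \sigma^2_{MC}$, equation \eqref{vardiff} expresses $D$ as $\int_{[0,1]^{2d}} K(u,\bar u)\, d\widehat f(u)\, d\widehat f(\bar u)$, where the kernel $K$ splits into a \emph{linear part} $2\sum_{j}\partial_j C(u)\bigl(C(\bar u)u^j - C(\bar u^1,\dots,\bar u^j\wedge u^j,\dots,\bar u^d)\bigr)$ and a \emph{quadratic part} $\sum_{i,j}\partial_i C(u)\,\partial_j C(\bar u)\bigl(C_{i,j}(u^i,\bar u^j)-u^i\bar u^j\bigr)$. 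It is useful to record the probabilistic meaning of these two groups: with $W:=\int_{[0,1]^d} B_C(u)\,d\widehat f(u)$ and $Y:=\sum_{j}\int_{[0,1]^d}\partial_j C(u)\,B_C(1,\dots,u^j,\dots,1)\,d\widehat f(u)$, equations \eqref{gc}, \eqref{cov} and \eqref{sigma} give $\sigma^2_{LHSD}=\Var(W-Y)$ and $\sigma^2_{MC}=\Var(W)$, so that $D=\Var(Y)-2\,\Cov(W,Y)$. Hence the claim $\sigma^2_{LHSD}\le\sigma^2_{MC}$ is equivalent to $\Var(Y)\le 2\,\Cov(W,Y)$, with the quadratic part of $K$ producing $\Var(Y)$ and the linear part producing $-2\,\Cov(W,Y)$.

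Next I would use the hypotheses on $f$ to fix the sign against which the kernel must be tested. Since $f$ is non-decreasing in each argument and $\max_{x}f(x)\le 0$, one has $\widehat f\le 0$ on $[0,1]^d$, and $\widehat f$ vanishes on every face $\{u^j=1\}$ by its definition in Theorem \ref{T1}. I would then transfer the differentials off $\widehat f$ and onto the kernel by the same multidimensional integration-by-parts identity of Zaremba used in the proof of Theorem \ref{T1}, applied separately in the $u$- and the $\bar u$-block. The boundary contributions vanish: the factor $\widehat f$ kills the faces $\{u^j=1\}$, while the copula factors $C$ and $\partial_i C$ appearing in $K$ vanish on the faces $\{u^j=0\}$. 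This rewrites $D$ as an integral of the product $\widehat f(u)\widehat f(\bar u)\ge 0$ against a measure built from $K$, reducing the problem to a sign statement for that measure.

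The heart of the proof is then a pointwise inequality on the kernel, and this is exactly what conditions \eqref{partial} and \eqref{partialsum} are designed to supply. Using that $C$ is a copula, all weights $\partial_j C\ge 0$, so it suffices to dominate, for each fixed $j$, the $j$-th slice of the quadratic part by the corresponding slice of the linear part. Condition \eqref{partial}, namely $\partial_j C(u)\le C(u)/u^j$, controls the diagonal terms $i=j$ (where $C_{j,j}(u^j,\bar u^j)=u^j\wedge\bar u^j$) together with the linear contribution; condition \eqref{partialsum} is tailored to bound the \emph{sum} of the off-diagonal bivariate marginals $C_{i,j}(u^i,\bar u^j)$, whose individual signs are not controlled, against the remaining linear term, the offset $(d-2)u^j$ accounting precisely for the number of off-diagonal directions. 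Summing these two estimates against the non-negative weights and integrating against $\widehat f(u)\widehat f(\bar u)\ge 0$ yields $D\le 0$.

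The main obstacle is the signed nature of $d\widehat f$: coordinatewise monotonicity of $f$ does not make $d\widehat f$ a positive measure in the Hardy--Krause sense, so the reduction to a manifestly-signed integrand via integration by parts (or, alternatively, via Leonov's decomposition of $\widehat f$ into a difference of completely monotone functions, as already invoked in the proof of Theorem \ref{T2}) has to be carried out with care, checking that every boundary term genuinely vanishes. The second delicate point is the simultaneous use of the two copula conditions: since the off-diagonal quantities $C_{i,j}(u^i,\bar u^j)-u^i\bar u^j$ need not have a fixed sign, one cannot treat the $(i,j)$ terms one at a time, and it is the aggregated inequality \eqref{partialsum} — rather than any per-term bound — that makes the quadratic part dominated by twice the covariance term.
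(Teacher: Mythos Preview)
Your identification of the pointwise kernel inequality as the heart of the argument, and your attribution of condition \eqref{partial} to the diagonal contribution and \eqref{partialsum} to the aggregated off-diagonal sum, matches the paper. However, the paper's route to $D\le 0$ is more direct than yours and does \emph{not} pass through any additional integration by parts.

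The paper argues as follows. From the hypotheses on $f$ (non-decreasing in each argument, $\max f\le 0$) it notes that $\widehat f$ is again non-decreasing in each argument, treats $d\widehat f(u)\,d\widehat f(\bar u)$ as non-negative, and reduces \eqref{vardiff} to showing the kernel is $\le 0$ pointwise. Since each $\partial_j C\ge 0$, it suffices to show for every fixed $j$ that
\[
2\bigl(C(\bar u)u^j - C(\bar u^1,\dots,\bar u^j\wedge u^j,\dots,\bar u^d)\bigr)\;\le\;\sum_{i=1}^d \partial_i C(\bar u)\bigl(u^j\bar u^i - C_{i,j}(u^j,\bar u^i)\bigr).
\]
The paper then proves two separate inequalities, each bounding the left-hand side (without the factor $2$) by one piece of the right: the $i=j$ term is handled via \eqref{partial} with a case split $\bar u^j\le u^j$ versus $u^j<\bar u^j$ (the second case also using that \eqref{partial} makes $C(\bar u)/\bar u^j$ non-increasing in $\bar u^j$); the sum over $i\ne j$ is handled by first replacing each $\partial_i C(\bar u)$ by $C(\bar u)/\bar u^i$ via \eqref{partial} and then invoking \eqref{partialsum}. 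Adding the two gives the factor $2$.

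Your proposed extra step---integrating by parts in the $u$- and $\bar u$-blocks to move the differentials onto $K$ and then testing the sign of the resulting ``measure built from $K$'' against $\widehat f(u)\widehat f(\bar u)\ge 0$---is not in the paper and is not carried to completion in your sketch. That measure would be the full $2d$-fold mixed difference of a kernel built from $C$, $\partial_i C$ and componentwise minima; you neither identify it nor show it has a sign, and there is no indication this is easier than the direct pointwise bound. Your worry that coordinatewise monotonicity of $\widehat f$ does not by itself make $d\widehat f$ a positive Hardy--Krause measure is a legitimate technical point, but the paper simply asserts sufficiency of the pointwise kernel bound on that basis; your integration-by-parts detour neither follows the paper nor visibly closes this gap.
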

\begin{proof}
 By the assumptions on $f$ it follows that $\widehat{f}$ is right-continuous, of bounded variation in the sense of Hardy-Kraus and monotone non-decresing in each argument. Thus by \eqref{vardiff} it is sufficient to show that
\begin{align*}
2 &\sum_{j = 1}^d \partial_j C(u^1, \ldots, u^d) \Bigl(C(\overline{u}^1, \ldots, \overline{u}^d) u^j - C(\overline{u}^1, \ldots, \overline{u}^{j -1}, \overline{u}^{j} \wedge u^j, \overline{u}^{j +1}, \ldots, \overline{u}^d) \Bigr) \notag\\
+ &\sum_{j = 1}^d \sum_{i = 1}^d \partial_i C(\overline{u}^1, \ldots, \overline{u}^d)  \partial_j C(u^1, \ldots, u^d) \Bigl( C_{i, j}(u^j, \overline{u}^i) - u^j \overline{u}^i \Bigr) \leq 0
\end{align*}
for all $(u^1, \ldots, u^d), (\overline{u}^1, \ldots, \overline{u}^d) \in [0,1]^d$.\\
This is true if
\begin{equation*}
 2 \left(C(\overline{u}^1, \ldots, \overline{u}^d) u^j - C(\overline{u}^1, \ldots, \overline{u}^{j -1}, \overline{u}^{j} \wedge u^j, \overline{u}^{j +1}, \ldots, \overline{u}^d) \right) \leq \sum_{i = 1}^d \partial_i C(\overline{u}^1, \ldots, \overline{u}^d) \left( u^j \overline{u}^i - C_{i, j}(u^j, \overline{u}^i)  \right)
\end{equation*}
holds for every $j \in \{1, \ldots, d\}$ and all $u^j \in [0,1], (\overline{u}^1, \ldots, \overline{u}^d) \in [0,1]^d$.\\
First we show that
\begin{equation*}
C(\overline{u}^1, \ldots, \overline{u}^d) u^j - C(\overline{u}^1, \ldots, \overline{u}^{j -1}, \overline{u}^{j} \wedge u^j, \overline{u}^{j +1}, \ldots, \overline{u}^d) \leq \partial_j C(\overline{u}^1, \ldots, \overline{u}^d) \left( u^j \overline{u}^j - u^j \wedge \overline{u}^j  \right).
\end{equation*}
Note that this is always true if $u^j \wedge \overline{u}^j \in \{0, 1\}$. Now assume that $0 < \overline{u}^j \leq u^j < 1$, then
\begin{align*}
 C(\overline{u}^1, \ldots, \overline{u}^d) u^j - C(\overline{u}^1, \ldots, \overline{u}^d) &\leq \partial_j C(\overline{u}^1, \ldots, \overline{u}^d) \left( u^j \overline{u}^j - \overline{u}^j  \right)\\
 C(\overline{u}^1, \ldots, \overline{u}^d) (u^j - 1) &\leq \partial_j C(\overline{u}^1, \ldots, \overline{u}^d) \overline{u}^j (u^j - 1)\\
 \frac{C(\overline{u}^1, \ldots, \overline{u}^d)}{\overline{u}^j} &\geq \partial_j C(\overline{u}^1, \ldots, \overline{u}^d)
\end{align*}
which is true by assumption \eqref{partial}. Next assume that $0 < u^j < \overline{u}^j < 1$, then
\begin{align*}
  C(\overline{u}^1, \ldots, \overline{u}^d) u^j - C(\overline{u}^1, \ldots, \overline{u}^{j -1}, u^j, \overline{u}^{j +1}, \ldots, \overline{u}^d) &\leq \partial_j C(\overline{u}^1, \ldots, \overline{u}^d) \left( u^j \overline{u}^j - u^j \right)\\
  C(\overline{u}^1, \ldots, \overline{u}^d) u^j - C(\overline{u}^1, \ldots, \overline{u}^{j -1}, u^j, \overline{u}^{j +1}, \ldots, \overline{u}^d) &\leq \partial_j C(\overline{u}^1, \ldots, \overline{u}^d) u^j \left( \overline{u}^j - 1 \right)\\
  C(\overline{u}^1, \ldots, \overline{u}^d) - \frac{C(\overline{u}^1, \ldots, \overline{u}^{j -1}, u^j, \overline{u}^{j +1}, \ldots, \overline{u}^d)}{u^j} &\leq \partial_j C(\overline{u}^1, \ldots, \overline{u}^d) \left( \overline{u}^j - 1 \right)\\
  C(\overline{u}^1, \ldots, \overline{u}^d) - \frac{C(\overline{u}^1, \ldots, \overline{u}^{j -1}, u^j, \overline{u}^{j +1}, \ldots, \overline{u}^d)}{u^j} &\leq \frac{C(\overline{u}^1, \ldots, \overline{u}^d)}{\overline{u}^j} \left( \overline{u}^j - 1\right)\\
  \frac{C(\overline{u}^1, \ldots, \overline{u}^{j -1}, u^j, \overline{u}^{j +1}, \ldots, \overline{u}^d)}{u^j} &\geq \frac{C(\overline{u}^1, \ldots, \overline{u}^d)}{\overline{u}^j},
\end{align*}
which holds since assumption \eqref{partial} implies that $\frac{C(u^1, \ldots, u^d)}{u^j}$ is non-increasing in $u^j$ for all  $u^j \in [0,1]$, $(u^1, \ldots, u^d) \in [0,1]^d$.\\
Let $C(\overline{u}^1, \ldots, \overline{u}^d) > 0$ then
\begin{align*}
C(\overline{u}^1, \ldots, \overline{u}^d) u^j - C(\overline{u}^1, \ldots, \overline{u}^{j -1}, \overline{u}^{j} \wedge u^j, \overline{u}^{j +1}, \ldots, \overline{u}^d) &\leq \sum_{\stackrel{i=1}{i \neq j}}^d \partial_i C(\overline{u}^1, \ldots, \overline{u}^d) \left( u^j \overline{u}^i - C_{i, j}(u^j, \overline{u}^i)  \right)\\
C(\overline{u}^1, \ldots, \overline{u}^d) u^j - C(\overline{u}^1, \ldots, \overline{u}^{j -1}, \overline{u}^{j} \wedge u^j, \overline{u}^{j +1}, \ldots, \overline{u}^d) &\leq \sum_{\stackrel{i=1}{i \neq j}}^d \frac{C(\overline{u}^1, \ldots, \overline{u}^d)}{\overline{u}^i} \left( u^j \overline{u}^i - C_{i, j}(u^j, \overline{u}^i)  \right)\\
(d-2) u^j + \frac{C(\overline{u}^1, \ldots, \overline{u}^{j -1}, \overline{u}^{j} \wedge u^j, \overline{u}^{j +1}, \ldots, \overline{u}^d)}{ C(\overline{u}^1, \ldots, \overline{u}^d)} &\geq \sum_{\stackrel{i=1}{i \neq j}}^d \frac{C_{i, j}(u^j, \overline{u}^i)}{\overline{u}^i}
\end{align*}
which is true by assumption \eqref{partialsum}. The case $C(\overline{u}^1, \ldots, \overline{u}^d) = 0$ follows by the fact that $\frac{C(\overline{u}^1, \ldots, \overline{u}^d)}{\overline{u}^i} \leq 1$ for all $(\overline{u}^1, \ldots, \overline{u}^d) \in [0,1]^d$.
\end{proof}
\begin{rem}
 Note that in the two-dimensional case, assumption \eqref{partial} is equivalent to the left tail increasing property which implies a positive quadrant dependence of the copula $C$. Loosely speaking this means that the components of $C$ are more likely to be simultaneously small or simulatneously large than in the independent case. More information on different dependence properties can be found in \citep{joe} and \citep{nelsen}.
\end{rem}
In the following two remarks we give examples of copula distributions which satisfy the assumptions of Theorem \ref{varcond}.
\begin{rem}\label{fgm}
 Consider a multi-dimensional, one-parametric extension of the Farlie-Gumbel-Morgenstern (FGM) copula given by
\begin{equation*}
 C(u^1, \ldots, u^d) = \left( \prod_{i = 1}^d u^i \right) \left( \alpha \prod_{i = 1}^d ( 1 - u^i) + 1 \right)
\end{equation*}
 where $\alpha \in [-1, 1]$. Simple calculations show that the assumption \eqref{partial} is true if $\alpha \in [0,1]$. Now consider the right hand-side of \eqref{partialsum}
\begin{align*}
 \sum_{i=1, i \neq j}^d \frac{C_{i, j}(u^j, \overline{u}^i)}{\overline{u}^i} &= \sum_{i=1, i \neq j}^d \frac{u^j \overline{u}^i}{\overline{u}^i}\\
 &= (d-1) u^j.
\end{align*}
Finally assumption \eqref{partialsum} holds since
\begin{align*}
 &\frac{C(\overline{u}^1, \ldots, \overline{u}^{j -1}, \overline{u}^{j} \wedge u^j, \overline{u}^{j +1}, \ldots, \overline{u}^d)}{ C(\overline{u}^1, \ldots, \overline{u}^d)}\\ 
= &\min \left( 1, \frac{C(\overline{u}^1, \ldots, \overline{u}^{j -1}, u^j, \overline{u}^{j +1}, \ldots, \overline{u}^d)}{ C(\overline{u}^1, \ldots, \overline{u}^d)} \right)\\
= &\min \left( 1, \frac{\left( \prod_{i = 1, i \neq j}^d \overline{u}^i \right) u^j \left( \alpha \prod_{i = 1, i \neq j}^d ( 1 - \overline{u}^i) (1-u^j) + 1 \right)}{\left( \prod_{i = 1}^d \overline{u}^i \right) \left( \alpha \prod_{i = 1}^d ( 1 - \overline{u}^i) + 1 \right)} \right)\\
= &\min \left( 1, u^j \frac{\left( \alpha \prod_{i = 1, i \neq j}^d ( 1 - \overline{u}^i) (1-u^j) + 1 \right)}{\overline{u}^j \left( \alpha \prod_{i = 1}^d ( 1 - \overline{u}^i) + 1 \right)} \right)\\
\geq &u^j
\end{align*}
for $\alpha \in [0,1]$.\\
Note that the independence copula $C(u^1, \ldots, u^d) = \prod_{i = 1}^d u^i$ is the special case of the FGM copula with $\alpha = 0$, therefore Theorem \ref{varcond} holds also for the independence copula.
\end{rem}

\begin{rem}
A multi-dimension version of the Ali-Mikhail-Haq (AMH) copula is given by
\begin{equation*}
 C(u^1, \ldots, u^d) = \frac{ \prod_{i = 1}^d u^i}{1 - \alpha \prod_{i = 1}^d ( 1 - u^i)}
\end{equation*}
where $\alpha \in [-1,1]$. As in the previous example it is easy to see that \eqref{partial} is fullfilled if $\alpha \in [0,1]$.\\ 
To prove \eqref{partialsum} consider again the term on the right hand-side
\begin{align*}
 \sum_{i=1, i \neq j}^d \frac{C_{i, j}(u^j, \overline{u}^i)}{\overline{u}^i} &= \sum_{i=1, i \neq j}^d \frac{u^j \overline{u}^i}{\overline{u}^i}\\
 &= (d-1) u^j.
\end{align*}
Furthermore Theorem \ref{varcond} applies since
\begin{align*}
 &\frac{C(\overline{u}^1, \ldots, \overline{u}^{j -1}, \overline{u}^{j} \wedge u^j, \overline{u}^{j +1}, \ldots, \overline{u}^d)}{ C(\overline{u}^1, \ldots, \overline{u}^d)}\\ 
= &\min \left( 1, \frac{C(\overline{u}^1, \ldots, \overline{u}^{j -1}, u^j, \overline{u}^{j +1}, \ldots, \overline{u}^d)}{ C(\overline{u}^1, \ldots, \overline{u}^d)} \right)\\
= &\min \left( 1,  u^j \frac{\left( \prod_{i = 1, i \neq j}^d \overline{u}^i \right) \left( 1 - \alpha \prod_{i = 1}^d ( 1 - \overline{u}^i) \right)}{\left( \prod_{i = 1}^d \overline{u}^i \right) \left(1 - \alpha \prod_{i = 1}^d ( 1 - \overline{u}^i) (1 - u_j) \right)} \right)\\
\geq &u^j
\end{align*}
\end{rem}

\section{Application to option pricing}
In this section we illustrate the effectiveness of Latin hypercube sampling with dependence in basket option pricing problems. The derivatives which we consider are Asian and lookback basket options. Let $(S_t)_{t \geq 0}$ be a $d$-dimensional vector of asset price processes and let $(S^j_t)_{t \geq 0}$ denote its $j$-th component. Then the price of an Asian basket call option is given by
\begin{equation*}
 \text{ABC} = \E\Bigl[e^{-rT} \Bigl(\frac 1 m \sum_{j = 1}^m \frac 1 d \sum_{i = 1}^d S^i_{t_j} - K\Bigr)^+ \Bigr],
\end{equation*}
where $K > 0$ denotes the fixed strike price, $d$ is the number of underlying assets, $0 = t_0 < t_1 < t_2 < \ldots < t_m = T$ denote the observation points, $T$ is the maturity of the option and $r$ denotes the risk free interest rate.
Similarly, the price of a discrete lookback basket call option is given by
\begin{equation*}
 \text{DLC} = \E\Bigl[e^{-rT} \Bigl(\max_{j = 1, \ldots, m} \frac 1 d \sum_{i = 1}^d S^i_{t_j} - K\Bigr)^+ \Bigr].
\end{equation*}
As a model for the asset price process $(S^j_t)_{t \geq 0}$ of each asset $j=1, \ldots,d$, we use
\begin{equation*}
 S^j_t = S^j_0 e^{(w^j - r) t + X^j_t}, \qquad j=1,\ldots, d, t \geq 0,
\end{equation*}
where $w^j \in \mathbb{R}$ are constants, $S^j_0 > 0$ denote the constant initial asset values and $X^j_t$ are variance gamma (VG) processes for $j = 1, \ldots, d$. The VG process $(X^j_t)_{t \geq 0}$ with parameters $(\theta^j, \sigma^j, c^j)$, which was first introduced by \citep{madan2}, is defined as a subordinated Brownian motion by
\begin{equation}\label{vg}
 X^j_t = X^j_t(\theta^j, \sigma^j, c^j) = B^j_{G^j_t(c^j,1)}(\theta^j,\sigma^j),\qquad j=1, \ldots,d, \ t \geq 0,
\end{equation}
where $B^j_t(\theta^j, \sigma^j)$ are independent Brownian motions with drift parameters $\theta^j$ and volatility parameters $\sigma^j, j=1,\ldots, d,$ and $G^j_t(c^j,1)$ are independent gamma processes independent of $B^j, j=1, \ldots, d$ with drift equal to one and volatility $c^j > 0$. To ensure that the discounted value of a portfolio invested in the asset is a martingale, we choose 
\begin{equation*}
w^j = \log(1-\mu^j c^j - (\sigma^j)^2 c^j / 2) / c^j, \qquad j=1,\ldots, d.
\end{equation*}
By \citep{madan} a VG process can also be represented as the difference of two independent gamma processes, ie $X^j_t = G^{+,j}_t - G^{-,j}_t, j=1,\ldots, d$. Let $(\mu^j_+, \nu^j_+)$ and $(\mu^j_-, \nu^j_-)$ denote the parameters of the gamma processes $G^{+,j},G^{-,j}$, respectively. These pairs of parameters can be easily calculated from the parameters in equation \eqref{vg} through 
\begin{equation*}
 \mu^j_\pm = (\sqrt{(\theta^j)^2 + 2 (\sigma^j)^2 / c^j} \pm \theta^j) / 2, \qquad \nu^j_\pm = (\mu^j_\pm)^2 c^j, \qquad j = 1,\ldots, d.
\end{equation*}
Due to the fact that a gamma process has non-decreasing paths, $G^{+,j}_t$ corresponds to the positive movements of $X^j_t$ and $G^{-,j}_t$ corresponds to the negative movements of $X^j_t$. Our assumption is that all positive movements of components of $X_t = (X^1_t, \ldots, X^d_t)$ are dependent and all negative movements of components of $X_t$ are dependent, but positive (negative) movements of the $j$-th component are independent of negative (positive) movements of all other components, for all $j=1, \ldots,d$. The dependence structure between positive and negative movements will be modelled by copulae $C^\pm$, respectively. Summarising, the increment of the $d$-dimensional gamma processes in the interval $[t_{i-1}, t_i]$ given by $(G_{t_i}^{\pm,1} - G_{t_{i-1}}^{\pm,1}, \ldots, G_{t_i}^{\pm,d} - G_{t_{i-1}}^{\pm,d})$ has cumulative distribution function $C^\pm(F_{1,\pm}^{-1}, \ldots, F_{d,\pm}^{-1})$, where $F_{j,\pm}^{-1}$ is the inverse cumulative distribution function of a gamma distribution with the specific parameters of the $j$-th asset.

\subsection{Numerical results}\label{numerical}
In this subsection, we compare the performance of LHSD with a standard Monte Carlo method in option pricing problems.

\begin{table}[H]
\textbf{Parameters of the numerical examples}\\[0.5cm]
\centering
\begin{tabular}[h]{lr}
\hline
\textbf{VG parameters:}&\\
\hline
$\mu_j, j = 1, \ldots, d$&	-0.2859\\
$\sigma_j, j = 1, \ldots, d$&	0.1927\\
$c_j, j = 1, \ldots, d$&	0.2505\\
\hline
\textbf{Option parameters:}&\\
\hline
number of assets $d$& 10\\
maturity $T$&	1\\
initial asset price $S^j_0, j = 1, \ldots, d$&	100\\
risk free interest rate $r$&	0.05\\		
number of monitoring points $k$&	4\\
time between monitoring points $t_i - t_{i-1}, i = 1, \ldots, k$& 0.25\\
\hline
\textbf{Simulation parameters:}&\\
\hline
number of simulated option prices per estimator $n$& 8000\\
number of simulations of the estimators $m$& 100\\
choice of parameters $\eta_{i,n}^j, j = 1, \ldots, d, i = 1, \ldots, n$& 0.5\\
\hline
\end{tabular}
	\caption{Parameters sets for the VG processes, the options and the simulations.}
	\label{tab:1}
\end{table}
The parameters of the underlying VG processes are stated in Table \ref{tab:1} and are the same for all components of $(S_t)_{t \geq 0}$. The parameter values are taken from a calibration of the VG process against options on the S\&P 500 index by \citep{hirsa}. We observed in price valuations, which we do not state here in detail, that the computation of one LHSD estimator took about $1.4$ times of the computation time of a corresponding Monte Carlo estimator. Nevertheless in our concrete implementation the most time consuming part was the transformation of uniformly distributed random variables into gamma distributed random variables. This has to be done only once for all LHSD estimations since by \eqref{vlhsd} where $\eta_{i,n}^j = 1/2, j = 1, \ldots, d, i = 1, \ldots, n$ one only needs fixed quantiles of the gamma distribution. Therefore computation of 4000 LHSD estimators was about five times faster than the computation of 4000 Monte Carlo estimators. One the other hand for the Monte Carlo estimator, one has to perform the transformation $d n$ times for each estimator.\\
Using the parameters of Table \ref{tab:1}, the evaluation of each of the option values included the computation of an $80$-dimensional integral. Standard deviation and variance were computed based on the $m = 100$ runs of the LHSD and MC estimators. The ratios in columns 6 and 7 of each table were computed as the quotient of MC value and LHSD value.\\ 
It is obvious that the effectiveness of LHSD compared to MC decreases with increasing strike price $K$. The same phenomenon was also observed by \citep{Packham} in a multi-dimensional Black-Scholes model for the LHSD estimator and by \citep{glass} for the standard LHS estimator.

\begin{table}[H]
\textbf{Prices of Asian basket call options with varying strike price $K$}\\[0.5cm]
\centering
\begin{tabular}[h]{rrrrrrrr}	
  $\alpha$ &$K$ &Price LHSD &Price MC &Std.\ Dev.\ LHSD &Std.\ Dev.\ MC &Std.\ Dev.\ ratio &Var.\ ratio\\
\hline\\
0.5 & 80	& 22.0542	& 22.0448	&0.00071	&0.00748	&10.419	&108.575\\
0.5 & 90	& 12.5511	& 12.5419	&0.00080	&0.00748	&9.270	&85.944\\
0.5 & 100	& 3.79294	& 3.78732	&0.00241	&0.00621	&2.577	&6.642\\
0.5 & 110	& 0.17227	& 0.17210	&0.00119	&0.00140	&1.174	&1.379\\
0.5 & 120	& 0.00024	& 0.00024	&0.000040	&0.000041	&1.009	&1.018
\end{tabular}
	\caption{Prices of Asian basket call options, where the dependence structure of positive and negative movements are modelled by a FGM copula with parameter $\alpha$.}
	\label{tab:a1}
\end{table}

\begin{table}[H]
\textbf{Prices of Lookback basket call options with varying strike price $K$}\\[0.5cm]
\centering
\begin{tabular}[h]{rrrrrrrr}	
  $\alpha$ &$K$ &Price LHSD &Price MC &Std.\ Dev.\ LHSD &Std.\ Dev.\ MC &Std.\ Dev.\ ratio &Var.\ ratio\\
\hline\\
0.5 & 80	& 25.662	& 25.658	&0.00294	&0.00839	&2.850	&8.125\\
0.5 & 90	& 16.151	& 16.147	&0.00294	&0.00839	&2.850	&8.125\\
0.5 & 100	& 6.893	& 6.890	&0.00322	&0.00760	&2.356	&5.553\\
0.5 & 110	& 1.192	& 1.192 &0.00305	&0.00406	&1.332	&1.775\\
0.5 & 120	& 0.060	& 0.060	&0.00086	&0.00089	&1.029	&1.060
\end{tabular}
	\caption{Prices of Lookback basket call options, where the dependence structure of positive and negative movements are modelled by a FGM copula with parameter $\alpha$.}
	\label{tab:a2}
\end{table}

\small
\nocite{*}
\bibliography{BibLHSDcond}
\bibliographystyle{plainnat}

\end{document}